\pgfplotsset{compat=newest}
\newcommand{\rmd}{\mathrm{d}}
\newcommand{\bbE}{\mathbb{E}}\newcommand{\rme}{\mathrm{e}}
\newcommand{\bbN}{\mathbb{N}}
\newcommand{\bbR}{\mathbb{R}}
\newcommand{\bbS}{\mathbb{S}}
\newcommand{\sfI}{\mathsf{I}}
\newcommand{\sfR}{\mathsf{R}}
\newcommand{\sfX}{\mathsf{X}}
\newcommand{\sfZ}{\mathsf{Z}}
\newcommand{\sfd}{\mathsf{d}}
\newcommand{\cN}{\mathcal{N}}
\newcommand{\sfx}{\mathsf{x}}
\theoremstyle{mystyle}
\newtheorem{theorem}{Theorem}%[section]
\theoremstyle{mystyle}
\newtheorem{lemma}{Lemma}%[section]
\theoremstyle{mystyle}
\newtheorem{prop}{Proposition}%[section]
\theoremstyle{mystyle}
\newtheorem{corollary}{Corollary}%[thm]
\theoremstyle{mystyle}
\theoremstyle{remark}
\newtheorem{rem}{Remark}%[section]
\theoremstyle{mystyle}
\theoremstyle{mystyle}
\theoremstyle{mystyle}
\theoremstyle{discussion}
\theoremstyle{mystyle}
\theoremstyle{mystyle}
\begin{document}

\title{ Uniform  Distribution on  $(n-1)$-Sphere: Rate-Distortion under Squared Error Distortion}
\author{
 \IEEEauthorblockN{Alex Dytso$^{\dagger}$ and Martina Cardone$^{*}$}
$^{\dagger}$ Qualcomm Flarion Technologies, Bridgewater,  NJ 08807, USA, odytso2@gmail.com\\
$^{*}$ University of Minnesota, Minneapolis, MN 55455, USA, mcardone@umn.edu
}

\maketitle

\begin{abstract}
This paper investigates the rate-distortion function, under a squared error distortion $D$, for an $n$-dimensional random vector uniformly distributed on an $(n-1)$-sphere of radius $R$. 
First, an expression for the rate-distortion function is derived for any values of $n$, $D$, and $R$. 
Second, two types of asymptotics with respect to the rate-distortion function of a Gaussian source are characterized.
More specifically, these asymptotics concern the low-distortion regime (that is, $D \to 0$) and the high-dimensional regime (that is, $n \to \infty$).
\end{abstract}

\section{Introduction}

Consider an $(n-1)$-sphere of radius $R$ defined as 
\begin{equation}
\bbS^{n-1}(R) = \{ \sfx\in \bbR^n : \| \sfx\| =R \},
\end{equation}
where $\|\sfx \|$ is the Euclidean norm, and let $\sfX_R \in \bbR^n$ denote the random vector uniformly distributed on $\bbS^{n-1}(R)$. The random vector $\sfX_R$ appears frequently in various statistical and information theoretic applications, as we summarize next. 

In statistical applications, the distribution of $\sfX_R$ is known to be \emph{the least-favorable distribution}  for the estimation of a bounded normal mean~\cite{BERRY1990130,MARCHAND2002327,fourdrinier2010bayes}. The author of~\cite{BERRY1990130} also provided the expression for the minimum mean squared error (MMSE) of $\sfX_R$.  $\sfX_R$ is also a special case of the \emph{von Mises–Fisher random variable} which has applications in directional statistics~\cite{mardia2000directional}. It is also known that Bayesian estimators with spherically symmetric priors can be written as mixtures of more primitive estimators, namely Bayesian estimators where the prior is the distribution of $\sfX_R$~\cite{robert1990modified}. 

In information theory, the distribution of $\sfX_R$ has several applications. For example, it  is known to be \emph{capacity-achieving} for channels with a peak-power constraint, such as the vector Gaussian channel~\cite{dytso2019capacity} and the vector Gaussian wiretap channel~\cite{wireTapDegraded,vectWireTap}.  
Another application is in the finite blocklength information theory, where it is often used instead of the Gaussian distribution; such applications include point-to-point channels~\cite{tan2015third}, multiple-access channels~\cite{molavianjazi2015second,scarlett2015second}, broadcast channels~\cite{tuninetti2023second}, interference channels~\cite{IC_disp}, and Gel’fand--Pinsker channels\cite{scarlett2015dispersions} to name a few. 

This paper focuses on the rate-distortion function of $\sfX_R$.  The rate-distortion function of $\sfX_R$ has been considered in~\cite{riegler2023lossy}, where a lower bound on it
%the rate-distortion 
has been derived under a variety of distortions, including the squared error distortion.  In contrast, we are interested in characterizing the exact rate-distortion function under the squared error distortion.   The exact rate-distortion function is only known for a handful of sources; for examples, the interested reader is referred to~\cite{berger1998lossy} and references therein.  Distributions on spheres  also appear in spherical quantization~\cite{swaszek1983multidimensional, matschkal2009spherical,eghbali2019deep}, hypersphere learning~\cite{joseph2018adversarial}, and hypothesis testing~\cite{smith1994sphere}.

The distribution of $\sfX_R$ also exhibits several similarities to the Gaussian distribution. For instance, the marginal distribution of the first $k$ components
%dimensions 
of  $\sfX_R$, where $R =\sqrt{n}$, converges to a $k$-dimensional normal distribution in the total variation distance~\cite{stam1982limit} as $n \to \infty$. There are also convergence results of a similar nature for the mutual information and the MMSE.  In particular,  
%it is also known that 
for the Gaussian noise channel, if $\sfX_R$ is used as an input,  then we have the following limits (see Appendix~\ref{sec:MI-MMSE_limits} for the proof): for any $\sigma>0$, it holds that\footnote{For a pair of random vectors $U \in \bbR^{n}$ and $V \in \bbR^{n}$, 
the MMSE is defined as ${\rm mmse}(U|V)= \bbE[ \|U - \bbE[U|V]  \|^2]$.}
\begin{align}
\lim_{n \to \infty: \, R = \sigma \sqrt{n} } \frac{{\rm mmse}(\sfX_{R}| \sfX_R + \sfZ) }{ {\rm mmse}(\sfX_{G}| \sfX_G + \sfZ)} = 1, \label{eq:MMSE_limit} \\ 
\lim_{n \to \infty: \, R = \sigma \sqrt{n} } \frac{I(\sfX_{R}; \sfX_R + \sfZ) }{ I(\sfX_{G}; \sfX_G +\sfZ)} = 1,\label{eq:MI_limit}
\end{align}
where $\sfX_G \sim \cN(0, \sigma^2 I_n)$ and $\sfZ \sim \cN(0,  I_n)$ with $I_n$ being the identity matrix of dimension $n$, and where $\sfX_G$ and $\sfZ$ are independent.

\subsection{Problem Statement }
The goal of this paper is to study the rate-distortion function of $\sfX_R$ under a squared error distortion defined as,
\begin{equation}
\label{eq:RDFGenProbl}
\sfR_n (D;R) =  \!\!\!\inf_{ P_{\hat{\sfX}| \sfX_R }: \hat{\sfX} \in \bbR^n, \, \bbE \left[ \| \hat{\sfX} - \sfX_R \|^2 \right] \le D} I(\hat{\sfX}; \sfX_R) , \, \ D \ge  0. 
\end{equation}
Note that since $\bbE \left [ \| \sfX_R \|^2  \right ]  = R^2$, we assume that $R^2 > D$.
The first objective is to characterize $\sfR_n (D;R)$ in~\eqref{eq:RDFGenProbl} non-asymptotically for every value of $n,D$ and $R$. 

The second objective is to consider two types of asymptotics with respect to the rate-distortion function of a Gaussian source. In other words, we want to understand how the rate-distortion function of $\sfX_R $ compares to the rate-distortion function of a Gaussian random vector.  To make this point clear, recall that  for $\sfX_G \sim \mathcal{N}(0, \sigma^2 I_n)$  the rate-distortion function under a squared error distortion $D$ 
is given by~\cite[Thm.~10.3.3]{cover1999elements}
\begin{equation}
\label{eq:GaussianRDF}
   \sfR_n^{G}(D)= \sfR_n^{G}(D;\sigma^2)  = \frac{n}{2} \log^{+} \left( \frac{n \sigma^2}{D} \right).
\end{equation}
To define the first asymptotic, recall that for a random vector $\sfX$ with rate-distortion function $R_{\sfX}(D)$ the \emph{information dimension} is defined as\footnote{The information dimension is often defined as $\lim_{D \to 0 } \frac{n \sfR_{\sfX}(D)}{\sfR_n^{G}(D)}$. We here choose not to include the multiplicative  $n$ term.  } 
\begin{equation}
\label{eq:InforDim}
\sfd(\sfX)  = \lim_{D \to 0 } \frac{\sfR_{\sfX}(D)}{\sfR_n^{G}(D)}.
\end{equation}
The information dimension measures the rate of growth of the rate-distortion function with respect to the Gaussian rate-distortion function~\cite{kawabata1994rate,wu2010renyi,stotz2016degrees}. 
It is known that  $\sfd(\sfX) = 1$ for random variables whose distributions are absolutely continuous with respect to the Lebesgue measure,   and $\sfd(\sfX) = 0$ for discrete distributions~\cite[Prop.~2]{stotz2016degrees}.  We note that $\sfX_R$ is discrete only for $n=1$, and, for $n>1$, it is singular with respect to the Lebesgue measure and, thus, neither of these results apply.

The second asymptotic that we seek to understand concerns the high-dimensional regime, akin to the limits in~\eqref{eq:MMSE_limit} and~\eqref{eq:MI_limit}. This asymptotic is defined as follows,
\begin{equation}
    \lim_{n  \to \infty} \frac{\sfR_n (D;  {\sqrt{\alpha_n \, n}})}{\sfR_n^{G}(D)},
    \label{eq:LimHighDim}
\end{equation}
where $\alpha_n$ is some function of $n$ (e.g., $\alpha_n =\sqrt{\log n}$).

\subsection{Outline and Contributions}
Section~\ref{sec:preliminaries} focuses on some needed preliminary results pertaining to Bessel functions and some related functions. Section~\ref{sec:main_results} presents our main results.  In particular, Section~\ref{sec:Express_R(D)} presents two  expressions for $\sfR_n (D;R)$ in~\eqref{eq:RDFGenProbl} and it discusses their structures.   Furthermore, it establishes the following Gaussian proximity result,   $\frac{n-1}{n}   \sfR_n^{G}\left(D; 
 \frac{R^2}{n} \right)     \le \sfR_n (D;R) \le  \sfR_n^{G}\left(D; 
 \frac{R^2}{n} \right) $. Section~\ref{sec:low_dist_regime} characterizes the low-distortion limit of $\sfR_n (D;R)$ and it shows that $\sfd(\sfX_R) = 1 -\frac{1}{n}$.  Section~\ref{sec:high_dim_limit} focuses on the high-dimensional behavior and it characterizes the limit in~\eqref{eq:LimHighDim} . For example, it shows that~\eqref{eq:LimHighDim} is equal to one  as long as $\lim_{ n \to \infty} \frac{\log \alpha_n}{\log n} = 0$. 
Section~\ref{sec:MainTheorem} is dedicated to some of the proofs.
The remaining of this section is used to present the notation.

\subsection{Notation}

The modified Bessel function of the first kind of order $\nu$ is denoted by $\sfI_\nu$, and $\Gamma(\cdot)$ is the gamma function. All logarithms are natural.  With $h_b$ we denote the binary entropy.

The surface area of an $(n-1)$-sphere with radius one is denoted as $S_{n-1}$ and given by 
\begin{equation}
\label{eq:SurfaceArea}
S_{n-1} = \frac{2 \pi^{ \frac{n}{2}} }{\Gamma \left( \frac{n}{2} \right)}.
\end{equation}

\section{Preliminaries} 
\label{sec:preliminaries}
Bessel functions and related functions will play an important role in our analysis. Because of this, we next summarize some of these functions and their properties.
%as these will play an important role in our development.  

An approximation of the modified Bessel function that we will use throughout the paper is the following~\cite[eq. 9.6.26]{abramowitz1988handbook},
\begin{equation}
\sfI_{\nu}(t) = \frac{\rme^{t}}{ \sqrt{2 \pi t} } \left(1 - \frac{ 4 \nu^2-1}{ 8 t } + O\left( \frac{1}{t^2} \right) \right). \label{eq:Bessel_function_approx_large-x}
\end{equation}
The following commonly encountered ratio of Bessel functions will play an important role,
\begin{align}
\label{eq:RatioBessel}
f_\nu(t) = \frac{\sfI_\nu(t)}{\sfI_{\nu-1}(t)}, \ t \ge 0. 
\end{align}
The above ratio plays a fundamental role in a variety of application areas~\cite{SeguraBessel}, including information theory~\cite{dytso2019capacity}, signal processing~\cite{schniter2014compressive}, and statistics~\cite{robert1990modified}. In particular, the conditional mean, which is an optimal Bayesian estimator, in Gaussian noise involves the function $f_\nu(t)$~\cite{robert1990modified, dytso2019capacity}.

\begin{lemma} 
\label{lemma:PropertiesRatioBessels}
The function $f_\nu(t)$ in~\eqref{eq:RatioBessel}, with $\nu \geq 1/2$, satisfies the following properties:
\begin{itemize}
    \item $t \mapsto f_{\nu}(t)$ is monotone increasing in $t$~\cite[Thm.~1]{SeguraBessel};
    \item it holds that~\cite[Thm.~1]{SEGURA2011516},~\cite[Thm.~1.1]{Natalini2010}
    %~\cite[Theorem~1]{SeguraBessel}
    \begin{equation}
\label{eq:UpperBoundRatioBessel}
g_\nu(t) \leq f_\nu(t) \leq g_{\nu -\frac{1}{2}}(t) \leq 1,
\end{equation}
where
\begin{equation}
\label{eq:Defgnu}
    g_\nu(t)  = \frac{t}{\nu +\sqrt{\nu ^2+t^2}}.
\end{equation}
    %{\color{blue} ADD other properties that are needed }
\end{itemize}
\end{lemma}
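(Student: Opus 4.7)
The plan is to derive both statements from the three-term recurrence $\sfI_{\nu-1}(t) - \sfI_{\nu+1}(t) = \frac{2\nu}{t}\sfI_\nu(t)$ and the differentiation identity $\sfI_\nu'(t) = \sfI_{\nu-1}(t) - (\nu/t)\sfI_\nu(t)$, both of which come from termwise manipulation of the power series defining $\sfI_\nu$. Both statements in the lemma are classical and their detailed proofs appear in the cited references; what follows is my outline.

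For the bounds, dividing the three-term recurrence by $\sfI_\nu(t)$ yields the functional identity $1/f_\nu(t) = 2\nu/t + f_{\nu+1}(t)$. A direct algebraic check shows that $g_\nu(t)$ is the unique positive fixed point of $h \mapsto 1/(2\nu/t + h)$, so it satisfies $g_\nu^2(t) + (2\nu/t)\, g_\nu(t) - 1 = 0$. Combining the functional identity with the Tur\'an-type inequality $\sfI_\nu^2(t) \geq \sfI_{\nu-1}(t)\sfI_{\nu+1}(t)$, which is equivalent to $f_{\nu+1}(t) \leq f_\nu(t)$, gives $1 \leq (2\nu/t)\, f_\nu(t) + f_\nu^2(t)$; since $g_\nu(t)$ is the unique positive root of $x^2 + (2\nu/t)x - 1$, this forces $f_\nu(t) \geq g_\nu(t)$. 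For the upper bound $f_\nu(t) \leq g_{\nu-1/2}(t)$, I would iterate the functional identity and compare against the shifted quadratic with coefficient $(2\nu-1)/t$ whose positive root is $g_{\nu-1/2}(t)$; this step is more delicate and uses either the integral representation of $\sfI_\nu$ or a sharper Tur\'an-type inequality, both routes being carried out in~\cite{SEGURA2011516, Natalini2010}. The trivial inequality $g_{\nu-1/2}(t) \leq 1$ is immediate from $\nu - 1/2 + \sqrt{(\nu-1/2)^2 + t^2} \geq t$ whenever $\nu \geq 1/2$.

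For the monotonicity in $t$, differentiate $f_\nu$ by the quotient rule and substitute the two identities above to derive the Riccati equation
\begin{equation*}
f_\nu'(t) = 1 - \frac{2\nu - 1}{t}\, f_\nu(t) - f_\nu^2(t).
\end{equation*}
The positive root of $1 - (2\nu-1)x/t - x^2 = 0$ is precisely $g_{\nu-1/2}(t)$ and the other root $\beta(t)$ is negative, so the right-hand side factors as $-(f_\nu(t) - g_{\nu-1/2}(t))(f_\nu(t) - \beta(t))$. Combining $f_\nu(t) > 0 > \beta(t)$ with the already-established upper bound $f_\nu(t) \leq g_{\nu-1/2}(t)$ shows that the two factors have opposite signs, whence $f_\nu'(t) \geq 0$. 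I expect the main obstacle to be the identification of $g_{\nu-1/2}(t)$ as the sharp upper bound, since the half-integer shift in the index does not correspond to a single step of the recurrence; once this is borrowed from the cited references, monotonicity drops out of the Riccati factorization above.
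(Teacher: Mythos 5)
The paper offers no proof of this lemma at all: both bullet points are stated as citations to Segura and to Natalini--Palumbo, so there is nothing internal to compare your argument against. Your reconstruction is sound and is consistent with identities the paper itself relies on elsewhere (the Riccati equation $f_\nu'(t)=1-\frac{2\nu-1}{t}f_\nu(t)-f_\nu^2(t)$ is exactly what appears in the proof of the sixth property of Lemma~2). Your derivation of the lower bound $f_\nu \geq g_\nu$ from the recurrence $1/f_\nu = 2\nu/t + f_{\nu+1}$ together with the Tur\'an-type inequality is correct and self-contained modulo the (classical) Tur\'an inequality, and the factorization argument for monotonicity is correct given the upper bound $f_\nu \leq g_{\nu-1/2}$. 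The one caveat worth making explicit is the logical ordering: your monotonicity proof consumes the upper bound, and you rightly defer that upper bound to the cited references rather than deriving it; you should confirm that those references establish $f_\nu \leq g_{\nu-1/2}$ without themselves invoking monotonicity of $f_\nu$ in $t$ (they do --- Segura's argument is an iteration/comparison of the recurrence, not a derivative argument), otherwise the chain would be circular. As written, your sketch proves strictly more than the paper does, at the cost of still borrowing the sharpest ingredient from the same sources the paper cites.
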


Another two important functions that we will encounter are given by 
\begin{align}
\xi_{\nu}(t)&= - t f_{ \nu } \left(t \right )   + \log \left( (2 \pi)^\nu \frac{\sfI_{\nu-1}\left( t\right )}{ t^{\nu-1}} \right ), \ t  >0, \label{eq:xinu}\\
h_{\nu}(t)& = \xi_{\nu} \left( f_{ \nu }^{-1} \left(t \right ) \right), \ t \in (0,1), \label{eq:hnu}
\end{align}
where we will always have $\nu \geq 1/2$.
The functional inverse $f_{ \nu }^{-1}$ is well defined since $t \mapsto f_{ \nu }(t)$ is monotone increasing for $\nu \geq 1/2$ (see Lemma~\ref{lemma:PropertiesRatioBessels}). The next lemma provides some properties that we will use in the proof of our results. 
\begin{lemma}
\label{lemma:PropertyXinu}
Let $\nu \geq \frac{1}{2}$. Then, we have the following properties: 
\begin{itemize}
\item it holds that
\begin{equation}
f_\nu^{-1}(t)  = 2 \kappa_\nu \frac{t}{1-t^2}, \, t \in (0,1),
\label{eq:InverseOfF}
\end{equation}
where $\nu -\frac{1}{2} \leq \kappa_\nu \leq \nu $;
%$\nu -1\le \kappa_\nu \le \nu -\frac{1}{2} $; {\color{red}MC: Shouldn't this be $\nu -\frac{1}{2} \le \kappa_\nu \le \nu $?}
%
    \item 
$t \mapsto \xi_{\nu}(t)$ is monotone decreasing in $t$;
\item it holds that
\begin{equation}
\lim_{t \to 1^- } h_{\nu}(t)= \left \{  \begin{array}{cc}
%\infty  &  \nu < \frac{1}{2}\\
0  &  \nu = \frac{1}{2}\\
- \infty &  \nu > \frac{1}{2}\\
\end{array} \right. ; \label{eq: h_v limit}
\end{equation}
\item  it holds that 
\begin{equation}
     \lim_{t \to 0^{+} }h_\nu(t) = \log \left(  S_{2\nu-1}\right);  
    \label{eq:hv_to_0}
    \end{equation}
\item for any function $\alpha_\nu $ such that $\lim_{\nu \to  \infty} \alpha_{\nu} \, \nu =\infty $, it holds that
\begin{align}
    \lim_{ \nu \to \infty } \frac {h_{\nu} \left ( \sqrt{1- \frac{1}{ \alpha_{\nu} \nu} }  \right )}{\nu \log \nu }  = & -2 - \lim_{\nu \to \infty}   \frac{ \log  \alpha_\nu }{\log \nu };
    \label{eq:highNuH}
    \end{align}
\item it holds that
\begin{equation}
\label{eq:DerivativeofH}
\frac{{\rm d}}{{\rm d} t} h_\nu(t) = -f_\nu^{-1}(t).
\end{equation}
\end{itemize}
\end{lemma}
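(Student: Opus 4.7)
The plan is to dispatch the six properties in an order that lets later ones reuse earlier ones, starting with an explicit closed form for $g_\nu^{-1}$ and ending with the delicate joint limit. First I would establish \eqref{eq:InverseOfF}. Solving $y = g_\nu(t) = t/(\nu+\sqrt{\nu^2+t^2})$ by isolating the radical and squaring yields the closed form $g_\nu^{-1}(y) = 2\nu y/(1-y^2)$. Since $g_\nu$ and $g_{\nu-1/2}$ are both monotone increasing on $(0,\infty)$ with range $(0,1)$, inverting the sandwich $g_\nu(s) \le f_\nu(s) \le g_{\nu-1/2}(s)$ of Lemma~\ref{lemma:PropertiesRatioBessels} reverses it into $2(\nu-\tfrac12)\,t/(1-t^2) \le f_\nu^{-1}(t) \le 2\nu\,t/(1-t^2)$, which gives \eqref{eq:InverseOfF}.

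Next I would differentiate \eqref{eq:xinu} to prove monotonicity of $\xi_\nu$ and derive \eqref{eq:DerivativeofH}. Using the standard recurrence $\sfI_{\nu-1}'(s) = \sfI_\nu(s) + \frac{\nu-1}{s}\sfI_{\nu-1}(s)$, one gets $(\log \sfI_{\nu-1})'(s) = f_\nu(s) + (\nu-1)/s$, so cancellations leave
\begin{equation*}
\xi_\nu'(s) = -s\, f_\nu'(s),
\end{equation*}
which is strictly negative since $f_\nu$ is increasing for $\nu \ge 1/2$. The formula \eqref{eq:DerivativeofH} then follows from the chain rule $h_\nu'(t) = \xi_\nu'(f_\nu^{-1}(t))\cdot (f_\nu^{-1})'(t)$ combined with $(f_\nu^{-1})'(t) = 1/f_\nu'(f_\nu^{-1}(t))$.

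For the endpoint limits, I would evaluate $\xi_\nu$ directly. At $s \to 0^+$ the small-argument expansion $\sfI_{\nu-1}(s) \sim (s/2)^{\nu-1}/\Gamma(\nu)$ together with $s\,f_\nu(s) \to 0$ yields $\xi_\nu(0^+) = \log\bigl((2\pi)^\nu/(2^{\nu-1}\Gamma(\nu))\bigr) = \log\bigl(2\pi^\nu/\Gamma(\nu)\bigr) = \log S_{2\nu-1}$ via \eqref{eq:SurfaceArea}, proving \eqref{eq:hv_to_0}. For \eqref{eq: h_v limit}, note $s=f_\nu^{-1}(t) \to \infty$ as $t\to 1^-$ (with $\nu$ fixed), and substituting the Hankel expansion \eqref{eq:Bessel_function_approx_large-x} into both $f_\nu(s)$ and $\log \sfI_{\nu-1}(s)$ produces exponential cancellations that leave $\xi_\nu(s) = (\nu-\tfrac12)\bigl(1+\log(2\pi)-\log s\bigr) + o(1)$. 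For $\nu=\tfrac12$ the prefactor vanishes and the limit is $0$; for $\nu>\tfrac12$ the $-\log s$ term drives $\xi_\nu \to -\infty$.

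The most delicate step is \eqref{eq:highNuH}, where both $\nu$ and the argument of $\xi_\nu$ blow up simultaneously, making pointwise Bessel asymptotics awkward. The key trick is to sidestep this by integrating \eqref{eq:DerivativeofH} starting from \eqref{eq:hv_to_0}:
\begin{equation*}
h_\nu(t) = \log S_{2\nu-1} - \int_0^t f_\nu^{-1}(\tau)\,d\tau.
\end{equation*}
Sandwiching $f_\nu^{-1}$ by the explicit bounds \eqref{eq:InverseOfF} and using $\int_0^t 2\tau/(1-\tau^2)\,d\tau = \log\bigl(1/(1-t^2)\bigr)$ gives
\begin{equation*}
\log S_{2\nu-1} - \nu \log\tfrac{1}{1-t^2} \;\le\; h_\nu(t) \;\le\; \log S_{2\nu-1} - (\nu-\tfrac12)\log\tfrac{1}{1-t^2}.
\end{equation*}
Specializing to $t = \sqrt{1 - 1/(\alpha_\nu\nu)}$ makes the logarithmic term equal $\log(\alpha_\nu\nu)$, while Stirling's formula gives $\log S_{2\nu-1} = -\nu\log\nu + O(\nu)$. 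Dividing by $\nu\log\nu$ and sending $\nu \to \infty$, both bounds share the common limit $-1 - \bigl(1 + \lim \log\alpha_\nu/\log\nu\bigr) = -2 - \lim \log\alpha_\nu/\log\nu$, and the squeeze theorem closes the argument. The main obstacle is precisely this uniform control in the double limit, which the antiderivative representation, together with the clean envelope for $f_\nu^{-1}$, makes essentially a one-line computation once set up.
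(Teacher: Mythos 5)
Your proof is correct, and for four of the six properties it follows essentially the paper's own route: the first property by inverting the sandwich $g_\nu \le f_\nu \le g_{\nu-1/2}$, the fourth by the small-argument expansion of $\sfI_{\nu-1}$, and the third by sending $s=f_\nu^{-1}(t)\to\infty$ and applying the Hankel expansion (your explicit form $\xi_\nu(s)=(\nu-\tfrac12)(1+\log(2\pi)-\log s)+o(1)$ is a sharper statement of what the paper computes). For the second and sixth properties you arrive at the compact identity $\xi_\nu'(s)=-s\,f_\nu'(s)$, which makes monotonicity of $\xi_\nu$ an immediate consequence of the monotonicity of $f_\nu$ and makes \eqref{eq:DerivativeofH} a one-line chain-rule cancellation; the paper instead expands $\xi_\nu'$ as $-t+(2\nu-1)f_\nu+tf_\nu^2$ and re-proves negativity via the $g_\nu$ bounds, so your version is cleaner but equivalent. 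The genuine divergence is in the fifth property: the paper works directly with $h_\nu$ at the argument $2\alpha_\nu\nu^2$ and controls $\log\sfI_{\nu-1}$ there by telescoping the ratio bounds into a product of $g$ functions, whereas you integrate \eqref{eq:DerivativeofH} from the anchor \eqref{eq:hv_to_0} to get $h_\nu(t)=\log S_{2\nu-1}-\int_0^t f_\nu^{-1}(\tau)\,\rmd\tau$, sandwich the integrand by the explicit envelope $2\kappa\,\tau/(1-\tau^2)$, and finish with Stirling. This sidesteps the double-limit Bessel asymptotics entirely and is, in effect, the route the paper itself gestures at in the remark following Proposition~\ref{pro:GaussianityPrope} (proving the high-dimensional limit from the integral representation and the Gaussian proximity bounds); your squeeze closes because the gap between the two envelopes is $\tfrac{1}{2}\log(\alpha_\nu\nu)=o(\nu\log\nu)$. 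The only nitpick is the parenthetical claim that $g_{\nu-1/2}$ has range $(0,1)$, which fails at $\nu=\tfrac12$ where $g_0\equiv 1$; there the corresponding lower bound on $f_{1/2}^{-1}$ degenerates to the trivial bound $0$, which is all the stated conclusion requires, so nothing breaks.
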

\begin{proof}
See Appendix~\ref{app:AuxiliaryProp}.
\end{proof}

\noindent {\bf{Example.}} For $\nu=1/2$, it holds that
    \begin{equation}
f_{\frac{1}{2}}(t)  = \frac{\sfI_{\frac{1}{2}}(t)}{\sfI_{-\frac{1}{2}}(t)}
= \frac{\left( \frac{2}{\pi t} \right )^{\frac{1}{2}}\sinh(t)}{\left( \frac{2}{\pi t} \right )^{\frac{1}{2}}\cosh(t)} = \tanh(t),
\end{equation}
and hence,
\begin{equation}
  f^{-1}_{\frac{1}{2}}(t) =\tanh^{-1}(t) = \frac{1}{2} \log \left (\frac{1+t}{1-t} \right ).  
\end{equation}
Therefore, we arrive at
\begin{align}
\xi_{\frac{1}{2}}(t)&= - t f_{ \frac{1}{2} } \left(t \right )   + \log \left( (2 \pi)^{\frac{1}{2}} \frac{\sfI_{-\frac{1}{2}}\left( t\right )}{ t^{-\frac{1}{2}}} \right )  
\\& = -t \tanh(t) + \log \left( 2 \cosh(t) \right ) 
\\& = -t \frac{\rme^t - \rme^{-t}}{\rme^t + \rme^{-t}} + \log \left( \rme^t + \rme^{-t}\right ),
\end{align}
and hence, 
\begin{equation}
    h_{\frac{1}{2}}(t)=  \frac{1}{2} \log \left( \frac{(1-t)^{t-1}}{(1+t)^{t+1}}\right ) + \log(2) 
 =h_b\left(\frac{1+t}{2} \right),
\label{eq:h1/2}
\end{equation}
where recall that $h_b$ denotes the binary entropy.

\section{Main Results}
\label{sec:main_results}

\subsection{Expressions for $\sfR_n (D;R)$}
\label{sec:Express_R(D)}
The first main result of this paper is provided by the next theorem, the proof of which can be found in Section~\ref{sec:MainTheorem}.
\begin{theorem}
\label{thm:GeneralTheorem}
It holds that
\begin{align}
\label{eq:GeneralExpression}
\sfR_n (D;R) &= \log \left( S_{n-1}\right ) - h_{\frac{n}{2}} \left( \sqrt{1 -\frac{D}{R^2} } \right),
\end{align}
where $S_{n-1}$ is defined in~\eqref{eq:SurfaceArea} and $h_{\nu}(\cdot)$ is defined in~\eqref{eq:hnu}.
\end{theorem}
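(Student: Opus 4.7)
The plan is to prove \eqref{eq:GeneralExpression} by exhibiting matching achievability and converse bounds, leveraging the rotational symmetry of $\sfX_R$ and the fact that the von Mises--Fisher (vMF) distribution is the maximum-entropy distribution on the sphere given a prescribed mean vector.

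For the achievability bound, I would describe the optimal joint $(\sfX_R,\hat{\sfX})$ via a reverse test channel. Let $W$ be uniform on $\bbS^{n-1}(1)$ and, conditionally on $W$, let $\sfV=\sfX_R/R$ be vMF-distributed with mean direction $W$ and concentration $\kappa^{\star}=f_{n/2}^{-1}\!\left(\sqrt{1-D/R^2}\right)$; rotational invariance of the vMF then forces the marginal of $\sfX_R$ to be uniform on $\bbS^{n-1}(R)$, as required. Take the reconstruction to be $\hat{\sfX}=\sqrt{R^2-D}\,W$. Using $\bbE[\sfV\mid W]=f_{n/2}(\kappa^{\star})\,W$ and $\|\sfV\|=1$, a short computation gives $\bbE[\|\hat{\sfX}-\sfX_R\|^2]=R^2\bigl(1-f_{n/2}(\kappa^{\star})^2\bigr)=D$. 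The mutual information reduces to $I(\sfX_R;\hat{\sfX})=I(\sfV;W)=\log S_{n-1}-\xi_{n/2}(\kappa^{\star})$, where the conditional entropy is computed in closed form from the vMF density; invoking \eqref{eq:hnu} yields the right-hand side of \eqref{eq:GeneralExpression}.

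For the converse, take any admissible $P_{\hat{\sfX}\mid\sfX_R}$, set $\sfV=\sfX_R/R$ and $m(\hat x)=\bbE[\sfV\mid\hat{\sfX}=\hat x]$, and write $I(\sfX_R;\hat{\sfX})=\log S_{n-1}-\bbE[H(\sfV\mid\hat{\sfX})]$, with $H$ the entropy relative to the surface measure on $\bbS^{n-1}(1)$. The goal becomes $\bbE[H(\sfV\mid\hat{\sfX})]\leq h_{n/2}(\sqrt{1-D/R^2})$. The vMF maximum-entropy property on the sphere, applied to each conditional distribution with the prescribed mean vector $m(\hat x)$, gives $H(\sfV\mid\hat x)\leq h_{n/2}(\|m(\hat x)\|)$. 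The orthogonal decomposition of the conditional MSE, $\bbE[\|R\sfV-\hat x\|^2\mid\hat x]=R^2(1-\|m(\hat x)\|^2)+\|Rm(\hat x)-\hat x\|^2$, combined with the distortion constraint, yields $\bbE[\|m(\hat{\sfX})\|^2]\geq 1-D/R^2$. These two facts are then combined via Jensen's inequality applied to $t\mapsto h_{n/2}(\sqrt{t})$, with the monotonicity of $h_{n/2}$ from Lemma~\ref{lemma:PropertyXinu} used to absorb $\sqrt{\bbE[\|m\|^2]}$ into $\sqrt{1-D/R^2}$.

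The main obstacle I foresee is the Jensen step: it requires concavity of $t\mapsto h_{n/2}(\sqrt{t})$ on $(0,1]$, which via \eqref{eq:DerivativeofH} reduces to the inequality $s\,(f_{n/2}^{-1})'(s)\geq f_{n/2}^{-1}(s)$, a convexity-type property of $f_{n/2}^{-1}$ that can be addressed through the representation \eqref{eq:InverseOfF}. Should this route prove delicate, a clean alternative is to verify Csiszar's KKT sufficient condition on the proposed test channel: the induced forward channel is a vMF on $\bbS^{n-1}(\sqrt{R^2-D})$ with concentration $\kappa^{\star}$, and a brief calculation using the vMF normalization shows $P^{\star}(\hat x\mid\sfx)\propto P^{\star}(\hat x)\,e^{-\lambda^{\star}\|\sfx-\hat x\|^2}$ with $\lambda^{\star}=\kappa^{\star}/(2R\sqrt{R^2-D})$, while $\int e^{-\lambda^{\star}\|\sfx-\hat x\|^2}\,dP^{\star}(\hat x)$ is independent of $\sfx\in\bbS^{n-1}(R)$ by rotational symmetry; convexity of the rate-distortion functional then closes the proof.
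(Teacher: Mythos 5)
Your route is genuinely different from the paper's. The paper never splits into achievability and converse: it first shows (via Csisz\'ar's KKT conditions) that the optimal reconstruction law is uniform on some sphere $\bbS^{n-1}(r)$, then uses Lagrange duality to reduce $\sfR_n(D;R)$ to the finite-dimensional saddle problem $-\max_{r}\min_{\lambda}\bigl(\log q_\lambda(R;r)+D\lambda\bigr)$ with $q_\lambda$ an explicit Bessel expression, and solves for $\lambda^\star=\tfrac{1}{2rR}f_{n/2}^{-1}(\delta(r))$ and $r^\star=\sqrt{R^2-D}$ by calculus. Your achievability half is correct and complete as stated: the vMF reverse channel with concentration $\kappa^\star=f_{n/2}^{-1}(\sqrt{1-D/R^2})$ does induce the uniform marginal, meets the distortion with equality, and its conditional entropy is exactly $\xi_{n/2}(\kappa^\star)=h_{n/2}(\sqrt{1-D/R^2})$; your $\lambda^\star=\kappa^\star/(2R\sqrt{R^2-D})$ also agrees with the paper's \eqref{eq:LambdaOpt} at $r=\sqrt{R^2-D}$. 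What your approach buys is an explicit optimal test channel and a converse that is information-theoretically transparent (max-entropy plus an orthogonality/Jensen argument) rather than a Lagrangian computation.

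The one genuine gap is the one you flagged, and it is not closable by the tools the paper provides. Concavity of $t\mapsto h_{n/2}(\sqrt{t})$ is equivalent, via \eqref{eq:DerivativeofH}, to $s\mapsto f_{n/2}^{-1}(s)/s$ being nondecreasing, i.e.\ to $\kappa\mapsto f_{n/2}(\kappa)/\kappa=\sfI_{n/2}(\kappa)/(\kappa\,\sfI_{n/2-1}(\kappa))$ being nonincreasing. This is a true and classical Bessel monotonicity fact (it follows, e.g., from concavity of $f_\nu$ together with $f_\nu(0)=0$), but it does \emph{not} follow from \eqref{eq:InverseOfF}: the sandwich $2(\nu-\tfrac12)\tfrac{t}{1-t^2}\le f_\nu^{-1}(t)\le 2\nu\tfrac{t}{1-t^2}$ allows $\kappa_\nu$ to vary with $t$ and therefore cannot certify monotonicity of $f_\nu^{-1}(t)/t$. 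You would need to import this as a separate lemma with its own proof or citation. Also, two smaller points: (i) writing $I(\sfX_R;\hat{\sfX})=\log S_{n-1}-\bbE[H(\sfV\mid\hat{\sfX})]$ presumes the conditional laws are absolutely continuous with respect to surface measure; for arbitrary $P_{\hat{\sfX}\mid\sfX_R}$ you should phrase this as $I=\bbE[\kl{P_{\sfV\mid\hat{\sfX}}}{\mathrm{Unif}}]$, which only helps the converse; (ii) your fallback KKT verification, as sketched, checks only that $\int e^{-\lambda^\star\|\sfx-\hat{x}\|^2}\,\rmd P^\star(\hat{x})$ is constant on the source sphere, which gives $g(\hat{x})=1$ on the support of $P^\star_{\hat{\sfX}}$ but not the inequality $g(\hat{x})\le 1$ for $\hat{x}$ off the sphere $\bbS^{n-1}(\sqrt{R^2-D})$; that additional verification (or the paper's argmax-over-$r$ argument) is still required to close that alternative route.
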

An immediate consequence of the above theorem is the following corollary.
\begin{corollary}
It holds that
\begin{equation}
\sfR_{1}(D;R) = \log(2) -  h_b \left( \frac{1 + \sqrt{1-\frac{D}{R^2}}}{2} \right ),
\end{equation}
where $h_b$ denotes the binary entropy.
\end{corollary}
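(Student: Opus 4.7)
The plan is to specialize Theorem~\ref{thm:GeneralTheorem} to the case $n=1$, which sets the Bessel order to $\nu = n/2 = 1/2$, a case for which both the surface-area constant $S_{n-1}$ and the function $h_{\nu}$ admit elementary closed forms already worked out in the excerpt.

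First I would evaluate $S_{0}$ using the definition in~\eqref{eq:SurfaceArea}: since $\Gamma(1/2) = \sqrt{\pi}$, we get
\begin{equation}
S_{0} = \frac{2\pi^{1/2}}{\Gamma(1/2)} = \frac{2\sqrt{\pi}}{\sqrt{\pi}} = 2,
\end{equation}
so $\log(S_{0}) = \log(2)$. This matches the intuitive picture that $\bbS^{0}(R) = \{-R, +R\}$ consists of two points, and that $\sfX_{R}$ in one dimension is discrete and symmetric on these two atoms.

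Next I would substitute the closed form of $h_{1/2}$ derived in the example of Section~\ref{sec:preliminaries}, namely
\begin{equation}
h_{1/2}(t) = h_b\!\left(\tfrac{1+t}{2}\right),
\end{equation}
as displayed in~\eqref{eq:h1/2}. Plugging $t = \sqrt{1 - D/R^2}$ into this identity yields
\begin{equation}
h_{1/2}\!\left(\sqrt{1 - D/R^2}\,\right) = h_b\!\left(\frac{1 + \sqrt{1 - D/R^2}}{2}\right).
\end{equation}

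Finally, inserting these two evaluations into the general expression~\eqref{eq:GeneralExpression} from Theorem~\ref{thm:GeneralTheorem} with $n=1$ gives the claimed formula. There is no real obstacle here: the work has already been done in the Example of Section~\ref{sec:preliminaries} and in the definition of $S_{n-1}$, so the corollary is a one-line specialization. The only thing worth noting is a sanity check on the range of the argument: since $R^2 > D \ge 0$, we have $\sqrt{1 - D/R^2} \in [0,1)$, so $(1+\sqrt{1-D/R^2})/2 \in [1/2, 1)$ and $h_b$ is evaluated on its natural domain.
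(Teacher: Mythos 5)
Your proof is correct and matches the route the paper intends: the corollary is stated as an immediate consequence of Theorem~\ref{thm:GeneralTheorem}, obtained exactly as you do by setting $n=1$, computing $S_0 = 2\pi^{1/2}/\Gamma(1/2) = 2$, and invoking the closed form $h_{1/2}(t) = h_b\left(\frac{1+t}{2}\right)$ from the Example in Section~\ref{sec:preliminaries}. Your added sanity check on the domain of $h_b$ is a nice touch but not needed.
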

We now provide the following lemma, which characterizes the derivative of the rate-distortion function and will be useful 
in a few proofs. 
\begin{lemma} \label{lem:deriaveti_of_RD}
It holds that
   \begin{equation}
\frac{\rmd}{\rmd D}\sfR_n (D;R)  = - \frac{1}{2 R^2 \sqrt{1 - \frac{D}{R^2}}} f_{\frac{n}{2}}^{-1}  \left( \sqrt{1 -\frac{D}{R^2} } \right).
\end{equation}
\end{lemma}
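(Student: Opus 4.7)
The plan is to simply combine the closed-form expression for $\sfR_n(D;R)$ from Theorem~\ref{thm:GeneralTheorem} with the derivative identity~\eqref{eq:DerivativeofH} from Lemma~\ref{lemma:PropertyXinu} via the chain rule. No subtlety beyond routine calculus is anticipated.

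First, I would write
\begin{equation}
\sfR_n(D;R) = \log(S_{n-1}) - h_{n/2}(u(D)), \qquad u(D) := \sqrt{1-\tfrac{D}{R^2}},
\end{equation}
as given by Theorem~\ref{thm:GeneralTheorem}. Since the first term is independent of $D$, differentiation yields
\begin{equation}
\frac{\rmd}{\rmd D}\sfR_n(D;R) = -\,h_{n/2}'(u(D))\cdot u'(D).
\end{equation}

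Next I would compute each factor. The derivative of $u$ is elementary:
\begin{equation}
u'(D) = -\frac{1}{2R^2\sqrt{1-D/R^2}},
\end{equation}
valid for $D\in(0,R^2)$, which is exactly the regime of interest (recall the standing assumption $R^2>D$). For the derivative of $h_{n/2}$, I invoke~\eqref{eq:DerivativeofH}, which gives $h_\nu'(t)=-f_\nu^{-1}(t)$ with $\nu=n/2\geq 1/2$ (so that $f_\nu^{-1}$ is well defined by the monotonicity asserted in Lemma~\ref{lemma:PropertiesRatioBessels}). Substituting both expressions,
\begin{equation}
\frac{\rmd}{\rmd D}\sfR_n(D;R) = -\bigl(-f_{n/2}^{-1}(u(D))\bigr)\cdot\left(-\frac{1}{2R^2\sqrt{1-D/R^2}}\right),
\end{equation}
and the two negative signs combine with the leading minus to yield the claimed formula.

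The only step with any content is the derivative identity~\eqref{eq:DerivativeofH}, but that is part of Lemma~\ref{lemma:PropertyXinu}, whose proof is deferred to an appendix; here I am entitled to assume it. Thus there is no real obstacle, and the argument is a one-line chain rule computation.
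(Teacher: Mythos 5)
Your proof is correct and follows exactly the same route as the paper: differentiate the closed form from Theorem~\ref{thm:GeneralTheorem} by the chain rule and invoke the identity $h_\nu'(t)=-f_\nu^{-1}(t)$ from Lemma~\ref{lemma:PropertyXinu}. The sign bookkeeping checks out, so nothing further is needed.
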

\begin{proof}
From Theorem~\ref{thm:GeneralTheorem}, we have that
\begin{align}
%\label{eq:StartingDerivative}
\frac{\rmd}{\rmd D}\sfR_n (D;R) &= \frac{1}{2 R^2 \sqrt{1 - \frac{D}{R^2}}}  h^\prime_{\frac{n}{2}} \left( \sqrt{1 -\frac{D}{R^2} } \right)
\\ & = - \frac{1}{2 R^2 \sqrt{1 - \frac{D}{R^2}}} f_{\frac{n}{2}}^{-1}  \left( \sqrt{1 -\frac{D}{R^2} } \right),
\end{align}
where the second equality follows from~\eqref{eq:DerivativeofH} in Lemma~\ref{lemma:PropertyXinu}.
This concludes the proof of Lemma~\ref{lem:deriaveti_of_RD}.
\end{proof}
The expression of the rate-distortion function $\sfR_n (D;R)$ in Theorem~\ref{thm:GeneralTheorem} can be rewritten in an integral form, as shown by the next theorem.

\begin{theorem} \label{thm:integral_Express}
    For $ 0 \le  D \le R^2$, it holds that
\begin{equation}
    \sfR_n (D;R) =\int_0^{\sqrt{1 -\frac{D}{R^2} }}    f_{\frac{n}{2}}^{-1}  \left( u \right) \ \rmd u. 
\end{equation}    
\end{theorem}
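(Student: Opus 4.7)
The plan is to obtain the integral expression by viewing $h_{n/2}$ as an antiderivative of $-f_{n/2}^{-1}$ on $(0,1)$ with a known boundary value at $0^+$, and then substituting into Theorem~\ref{thm:GeneralTheorem}. Let $s = \sqrt{1 - D/R^2}$, so $s \in [0,1]$ when $0 \le D \le R^2$. From Lemma~\ref{lemma:PropertyXinu} we have two key ingredients: the derivative formula $\frac{\rmd}{\rmd t} h_{n/2}(t) = -f_{n/2}^{-1}(t)$ in~\eqref{eq:DerivativeofH}, and the boundary value $\lim_{t \to 0^+} h_{n/2}(t) = \log(S_{2 \cdot (n/2) - 1}) = \log(S_{n-1})$ from~\eqref{eq:hv_to_0}.

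Given these, the argument is a direct application of the fundamental theorem of calculus. For $D \in (0, R^2]$ (so $s < 1$), $f_{n/2}^{-1}$ is continuous on the compact interval $[0,s]$ (by monotonicity of $f_{n/2}$ from Lemma~\ref{lemma:PropertiesRatioBessels} and the explicit form in~\eqref{eq:InverseOfF}), so integrating the derivative yields
\begin{equation}
h_{n/2}(s) - \log(S_{n-1}) \;=\; \int_0^s h'_{n/2}(u)\,\rmd u \;=\; -\int_0^s f_{n/2}^{-1}(u)\,\rmd u.
\end{equation}
Substituting into the closed form in Theorem~\ref{thm:GeneralTheorem} gives
\begin{equation}
\sfR_n(D;R) \;=\; \log(S_{n-1}) - h_{n/2}(s) \;=\; \int_0^{\sqrt{1 - D/R^2}} f_{n/2}^{-1}(u)\,\rmd u,
\end{equation}
which is exactly the claim.

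The only subtlety is the boundary case $D = 0$, equivalently $s = 1$. For $n \ge 2$, equation~\eqref{eq: h_v limit} gives $\lim_{s \to 1^-} h_{n/2}(s) = -\infty$, so both the rate-distortion function and the integral diverge to $+\infty$ and the identity persists as an improper limit; for $n = 1$, the formula~\eqref{eq:h1/2} shows $h_{1/2}(1) = 0$ and the identity extends by continuity. I do not foresee any real obstacle: once the derivative formula~\eqref{eq:DerivativeofH} and the limit~\eqref{eq:hv_to_0} are in hand, the integral representation follows from a single application of the fundamental theorem of calculus, with the only care needed being the mild behavior at the endpoint $s = 1$.
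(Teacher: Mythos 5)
Your proof is correct and is essentially the paper's argument: both rest on the derivative identity \eqref{eq:DerivativeofH} and the boundary value \eqref{eq:hv_to_0}, combined with the fundamental theorem of calculus. The only cosmetic difference is that you integrate $h_{n/2}'$ directly in the variable $u$, whereas the paper first integrates $\frac{\rmd}{\rmd D}\sfR_n(D;R)$ (Lemma~\ref{lem:deriaveti_of_RD}) over $[D,R^2]$ and then performs the change of variables $u=\sqrt{1-t/R^2}$.
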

\begin{proof}
    We have that
    \begin{align}
       & - \sfR_n (D;R) \notag
       \\ & \stackrel{\rm (a)}{=} \sfR_n (R^2;R)  - \sfR_n (D;R) \\
       & \stackrel{\rm (b)}{=} \int_D^{R^2}   - \frac{1}{2 R^2 \sqrt{1 - \frac{t}{R^2}}} f_{\frac{n}{2}}^{-1}  \left( \sqrt{1 -\frac{t}{R^2} } \right) \ \rmd t\\
       & \stackrel{\rm (c)}{=} \int_{\sqrt{1 -\frac{D}{R^2} }}^{0}    f_{\frac{n}{2}}^{-1}  \left( u \right) \rmd u,
    \end{align}
  where the labeled equalities follow from: 
$\rm (a)$   
using the fact that $\lim_{ D \to R^2} \sfR_n (D;R) =0$, which follows from \eqref{eq:hv_to_0}, together with the expression of $\sfR_n (D;R)$ in Theorem~\ref{thm:GeneralTheorem}; 
$\rm (b)$ using Lemma~\ref{lem:deriaveti_of_RD} and the fundamental theorem of calculus; and  $\rm (c)$ applying the change of variable $u = \sqrt{1 -\frac{t}{R^2} }$. This concludes the proof of Theorem~\ref{thm:integral_Express}. 
\end{proof}
Theorem~\ref{thm:integral_Express} is useful to show  several things.  First, it can be used for a numerical implementation of $\sfR_n (D;R)$.  A second application is shown next and it establishes the proximity of $\sfR_n (D;R)$ to the rate-distortion function $\sfR_n^{G}(D;\sigma^2)$ of a Gaussian random vector defined in~\eqref{eq:GaussianRDF}.
\begin{prop} \label{prop:Gaussian_Prox}
    For $ 0 \le  D \le R^2$, it holds that
    \begin{equation}
    \frac{n-1}{n}   \sfR_n^{\text{G}}\left(D; 
 \frac{R^2}{n} \right)      \le  \sfR_n (D;R) \le \sfR_n^{\text{G}}\left(D; 
 \frac{R^2}{n} \right).   
    \end{equation}
\end{prop}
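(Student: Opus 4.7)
The plan is to use the integral representation of $\sfR_n(D;R)$ from Theorem~\ref{thm:integral_Express} together with the two-sided sandwich for $f_\nu^{-1}$ coming from equation~\eqref{eq:InverseOfF} in Lemma~\ref{lemma:PropertyXinu}. First, I would unpack~\eqref{eq:InverseOfF} with $\nu = n/2$: since $\nu - \tfrac{1}{2} \le \kappa_\nu \le \nu$, one obtains the pointwise bounds
\begin{equation}
\frac{(n-1) u}{1-u^2} \le f_{n/2}^{-1}(u) \le \frac{n\, u}{1-u^2}, \qquad u \in (0,1).
\end{equation}
(These can be re-derived directly by inverting the two flanking functions $g_\nu$ and $g_{\nu-1/2}$ of~\eqref{eq:Defgnu}, which gives $g_\nu^{-1}(u) = 2\nu u/(1-u^2)$, hence the same sandwich.)

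Next, I would substitute this sandwich into the integral representation from Theorem~\ref{thm:integral_Express},
\begin{equation}
\sfR_n(D;R) = \int_0^{\sqrt{1-D/R^2}} f_{n/2}^{-1}(u)\, \rmd u,
\end{equation}
and evaluate the resulting bounding integrals in closed form. The elementary antiderivative
$\int \tfrac{u}{1-u^2}\,\rmd u = -\tfrac12 \log(1-u^2)$, evaluated at the upper limit $u = \sqrt{1 - D/R^2}$ (so that $1-u^2 = D/R^2$), yields
\begin{equation}
\int_0^{\sqrt{1-D/R^2}} \frac{u}{1-u^2}\, \rmd u = \frac{1}{2} \log\!\lr{\frac{R^2}{D}}.
\end{equation}

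Combining these two steps gives
\begin{equation}
\frac{n-1}{2} \log\!\lr{\frac{R^2}{D}} \le \sfR_n(D;R) \le \frac{n}{2} \log\!\lr{\frac{R^2}{D}}.
\end{equation}
To finish, I would identify the right-hand side with the Gaussian rate-distortion function evaluated at variance $R^2/n$: by~\eqref{eq:GaussianRDF}, and using the hypothesis $D \le R^2$ so that $\log^+ = \log$,
\begin{equation}
\sfR_n^{G}\!\lr{D; \tfrac{R^2}{n}} = \frac{n}{2} \log\!\lr{\frac{R^2}{D}},
\end{equation}
which immediately yields the claimed inequalities. There is no genuine obstacle here; the only mildly delicate point is making sure the bounds on $\kappa_{n/2}$ from~\eqref{eq:InverseOfF} are applied pointwise (uniformly in $u$), which is why re-deriving the sandwich from $g_\nu, g_{\nu-1/2}$ is a safer presentation than invoking a single unspecified $\kappa$.
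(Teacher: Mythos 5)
Your proof is correct. For the lower bound it is exactly the paper's argument: the integral representation of Theorem~\ref{thm:integral_Express} combined with $f_{n/2}^{-1}(u) \ge (n-1)u/(1-u^2)$, which follows from $\kappa_\nu \ge \nu - \tfrac12$ in~\eqref{eq:InverseOfF} (equivalently from inverting $f_\nu \le g_{\nu-1/2}$). Where you differ is the upper bound: the paper simply invokes the standard fact that the Gaussian source is the hardest to compress among sources with a given second moment (citing~\cite[Exercise~10.8]{cover1999elements}), whereas you derive the upper bound from the same sandwich, using $\kappa_\nu \le \nu$, i.e., $f_{n/2}^{-1}(u) \le n u/(1-u^2)$, and integrating. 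Both are valid; your route has the merit of being self-contained and of making the two bounds visibly symmetric consequences of the single two-sided Bessel-ratio inequality~\eqref{eq:UpperBoundRatioBessel}, while the paper's citation is shorter and emphasizes that the upper bound holds for \emph{any} source of second moment $R^2$, not just the spherical one. Your remark about applying the bounds on $\kappa_{n/2}$ pointwise in $u$ (since $\kappa_\nu$ in~\eqref{eq:InverseOfF} implicitly depends on $t$) is well taken and is the right way to present the step rigorously; the closed-form evaluation $\int_0^{\sqrt{1-D/R^2}} u/(1-u^2)\,\rmd u = \tfrac12\log(R^2/D)$ and the identification $\sfR_n^{G}(D;R^2/n) = \tfrac{n}{2}\log(R^2/D)$ for $D \le R^2$ are both correct.
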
   
\begin{proof}
    The proof of the upper bound is a well-known fact about Gaussian random vectors; see for example~\cite[Exercise~10.8]{cover1999elements}. 
To show the lower bound, we start with Theorem~\ref{thm:integral_Express}. We have that
    \begin{align}
    \sfR_n (D;R) & =\int_0^{\sqrt{1 -\frac{D}{R^2} }}    f_{\frac{n}{2}}^{-1}  \left( u \right) \ \rmd u\\
    & \ge  2\left( \frac{n-1}{2}   \right)\int_0^{\sqrt{1 -\frac{D}{R^2} }}    \frac{u}{1-u^2} \ \rmd u\\
    &= \frac{n-1}{n} \sfR_n^{\text{G}}\left(D; 
 \frac{R^2}{n} \right),
    \end{align}
    where the inequality follows from~\eqref{eq:InverseOfF}. This concludes the proof of Proposition~\ref{prop:Gaussian_Prox}. 
\end{proof}

\subsection{Low-Distortion Regime and Information Dimension} 
\label{sec:low_dist_regime}
We here characterize the low distortion limits.   The first limit follows from combining~\eqref{eq:GeneralExpression} and~\eqref{eq: h_v limit}.
\begin{prop}
It holds that
    \begin{equation}
        \lim_{D \to 0^+} \sfR_n (D;R) =  \left\{ \begin{array}{cc}
         \log(2)  & n=1\\
         \infty & n>1
        \end{array} \right.  .
    \end{equation}
\end{prop}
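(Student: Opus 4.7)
The plan is to evaluate the limit by directly substituting into the closed-form expression from Theorem~\ref{thm:GeneralTheorem} and invoking the boundary behavior of $h_\nu$ recorded in~\eqref{eq: h_v limit}. Concretely, I would start from the identity
\begin{equation}
\sfR_n(D;R) = \log(S_{n-1}) - h_{\frac{n}{2}}\!\left(\sqrt{1 - \tfrac{D}{R^2}}\right),
\end{equation}
observe that as $D \to 0^+$ the argument $\sqrt{1 - D/R^2}$ approaches $1$ from below, and then split into the cases $n=1$ and $n>1$ according to whether $\nu = n/2$ equals $1/2$ or exceeds $1/2$.

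For the case $n=1$, I would note that $\nu = 1/2$, so~\eqref{eq: h_v limit} gives $\lim_{t \to 1^-} h_{1/2}(t) = 0$. The constant term is $\log(S_0)$, and from the surface-area formula~\eqref{eq:SurfaceArea} with $n=1$ one computes
\begin{equation}
S_0 = \frac{2\pi^{1/2}}{\Gamma(1/2)} = \frac{2\sqrt{\pi}}{\sqrt{\pi}} = 2,
\end{equation}
yielding $\lim_{D \to 0^+} \sfR_1(D;R) = \log(2)$. As a sanity check, this matches the corollary right after Theorem~\ref{thm:GeneralTheorem} since $h_b(1) = 0$.

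For the case $n>1$, the parameter $\nu = n/2 > 1/2$, so~\eqref{eq: h_v limit} gives $\lim_{t \to 1^-} h_{n/2}(t) = -\infty$. Since $\log(S_{n-1})$ is a finite constant, subtracting $-\infty$ yields $\lim_{D \to 0^+} \sfR_n(D;R) = +\infty$. Combining the two cases produces the claimed piecewise expression.

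There is no genuine obstacle here: all heavy lifting has already been done in Theorem~\ref{thm:GeneralTheorem} and Lemma~\ref{lemma:PropertyXinu} (specifically~\eqref{eq: h_v limit}). The only thing worth stating carefully is the continuity argument, namely that since $D \mapsto \sqrt{1 - D/R^2}$ is continuous on $[0,R^2]$ with value $1$ at $D = 0$, one may interchange the limit in $D$ with the limit of $h_{n/2}$ at $1^-$; this is immediate from the monotonicity of $t \mapsto h_\nu(t)$ (inherited from the monotonicity of $\xi_\nu$ in Lemma~\ref{lemma:PropertyXinu} together with the monotonicity of $f_\nu^{-1}$). Hence the proof amounts to the two-line substitution above.
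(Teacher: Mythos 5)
Your proof is correct and follows exactly the route the paper takes: the paper's entire argument is the one-line remark that the result "follows from combining~\eqref{eq:GeneralExpression} and~\eqref{eq: h_v limit}," which is precisely your substitution of the closed form from Theorem~\ref{thm:GeneralTheorem} together with the case split on $\nu=1/2$ versus $\nu>1/2$. Your explicit computation $S_0=2$ and the consistency check against the corollary are fine additions but not a different method.
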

A more refined behavior of the rate-distortion function $\sfR_{\sfX}(D)$ of a random vector $\sfX$  around $D \to 0$ is captured by the information dimension defined in~\eqref{eq:InforDim}. 

The next result provides the information dimension for $\sfX_R$.  We note that 
this has been previously derived in~\cite[eq.~6.32]{riegler2023lossy} by first characterizing the quantization dimension.  Our approach here is more direct since we leverage directly the expression of $\sfR_{n}(D;R)$ in Theorem~\ref{thm:GeneralTheorem}.
\begin{prop}
\label{prop:SmallD}
    Fix  $R >0$ and $n \in \bbN$. Then, it holds that
    \begin{equation}
        \sfd(\sfX_R) = 1-\frac{1}{n}.
    \end{equation}
\end{prop}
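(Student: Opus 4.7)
The plan is to establish matching inequalities $\liminf_{D \to 0^+} \sfR_n(D;R)/\sfR_n^G(D) \ge (n-1)/n$ and $\limsup_{D \to 0^+} \sfR_n(D;R)/\sfR_n^G(D) \le (n-1)/n$. The liminf comes essentially for free from Proposition~\ref{prop:Gaussian_Prox}, while the limsup requires a sharpening of the estimate on $f_{n/2}^{-1}(u)$ near $u = 1$, obtained by expanding the Bessel asymptotic~\eqref{eq:Bessel_function_approx_large-x}.

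For the lower bound, the lower half of Proposition~\ref{prop:Gaussian_Prox} gives, for $D < R^2$,
\begin{equation*}
\sfR_n(D;R) \;\ge\; \tfrac{n-1}{n}\,\sfR_n^G\!\lr{D;\tfrac{R^2}{n}} \;=\; \tfrac{n-1}{2}\log\!\lr{\tfrac{R^2}{D}}.
\end{equation*}
Since $\sfR_n^G(D;\sigma^2) = (n/2)\log(n\sigma^2/D) \sim (n/2)\log(1/D)$ as $D \to 0^+$, dividing and passing to the limit yields $\liminf \ge (n-1)/n$ immediately. The trivial upper half of Proposition~\ref{prop:Gaussian_Prox} only gives $\limsup \le 1$, which is not tight, so a separate argument is needed for the matching upper bound.

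For the upper bound, I would start from the integral representation in Theorem~\ref{thm:integral_Express}. A short algebraic manipulation of $f_\nu(t) = \sfI_\nu(t)/\sfI_{\nu-1}(t)$ using~\eqref{eq:Bessel_function_approx_large-x}, exploiting cancellation of the leading $\rme^t/\sqrt{2\pi t}$ factors and noting that $\tfrac{1}{8}[(4(\nu-1)^2 - 1) - (4\nu^2 - 1)] = -(\nu - 1/2)$, gives
\begin{equation*}
f_\nu(t) \;=\; 1 - \frac{\nu - 1/2}{t} + O(1/t^2), \qquad t \to \infty,
\end{equation*}
which inverts to $f_\nu^{-1}(u) = \frac{\nu - 1/2}{1 - u}(1 + o(1))$ as $u \to 1^-$. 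Hence for every $\epsilon > 0$ there exists $\delta \in (0,1)$ such that $f_{n/2}^{-1}(u) \le \frac{(n-1)/2 + \epsilon}{1-u}$ on $(1-\delta,1)$. Splitting the integral in Theorem~\ref{thm:integral_Express} at $u = 1 - \delta$ and using $1 - \sqrt{1 - D/R^2} \sim D/(2R^2)$ yields
\begin{equation*}
\sfR_n(D;R) \;\le\; \lr{\tfrac{n-1}{2} + \epsilon}\log(1/D) + O_\delta(1).
\end{equation*}
Dividing by $\sfR_n^G(D)$, taking $D \to 0^+$, and then sending $\epsilon \to 0^+$ closes the argument.

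The main obstacle is the Bessel-ratio expansion: one must verify that the $O(1/t^2)$ error in~\eqref{eq:Bessel_function_approx_large-x} does not contaminate the coefficient $\nu - 1/2$ when taking the quotient $\sfI_\nu/\sfI_{\nu-1}$, and that the inversion $f_\nu \mapsto f_\nu^{-1}$ preserves the $o(1)$ error as $u \to 1^-$. Equivalently, this amounts to showing that the parameter $\kappa_\nu$ in~\eqref{eq:InverseOfF} satisfies $\kappa_\nu \to \nu - 1/2$ as $u \to 1^-$, sharpening the interval bound $\nu - 1/2 \le \kappa_\nu \le \nu$ at the endpoint that matters. Once this is in hand, the rest is elementary integration and limit bookkeeping.
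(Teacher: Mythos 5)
Your argument is correct and reaches the same conclusion, but it is organized differently from the paper's proof. The paper handles $n=1$ separately (there $\sfX_R$ is discrete, so $\sfd=0$) and, for $n>1$, computes $\lim_{D\to 0}\sfR_n(D;R)/\log D$ in one shot via L'H\^opital's rule together with Lemma~\ref{lem:deriaveti_of_RD}, reducing everything to $\lim_{t\to\infty}\lr{f_{n/2}(t)-1}t=\tfrac{1}{2}-\tfrac{n}{2}$, which it reads off from~\eqref{eq:Bessel_function_approx_large-x}. You instead prove two one-sided bounds: the $\liminf$ from the lower half of Proposition~\ref{prop:Gaussian_Prox}, and the $\limsup$ by splitting the integral of Theorem~\ref{thm:integral_Express} after sharpening the estimate on $f_{n/2}^{-1}$ near $1$. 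The two routes hinge on exactly the same analytic fact --- your expansion $f_\nu(t)=1-\tfrac{\nu-1/2}{t}+O(1/t^2)$ is precisely the limit the paper evaluates at step $\rm (e)$, restated as $f_\nu^{-1}(u)\sim\tfrac{\nu-1/2}{1-u}$ under the change of variables --- so the "main obstacle" you flag is no worse than what the paper itself relies on, and the cancellation $\tfrac{1}{8}\lrs{(4(\nu-1)^2-1)-(4\nu^2-1)}=-(\nu-\tfrac12)$ is verified correctly. What the paper's route buys is economy: one limit computation gives the exact value, with no $\epsilon$--$\delta$ bookkeeping and no need for Proposition~\ref{prop:Gaussian_Prox}. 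What your route buys is that the upper bound is a genuine nonasymptotic estimate, $\sfR_n(D;R)\le\lr{\tfrac{n-1}{2}+\epsilon}\log(1/D)+O_\delta(1)$, which is slightly more informative than the bare limit. One small caveat: for $\nu=\tfrac12$ (i.e., $n=1$) the leading coefficient $\nu-\tfrac12$ vanishes and the asserted asymptotic equality $f_{1/2}^{-1}(u)=\tfrac{0}{1-u}(1+o(1))$ is false ($\tanh^{-1}(u)$ diverges logarithmically); the one-sided inequality $f_{1/2}^{-1}(u)\le\tfrac{\epsilon}{1-u}$ that you actually use still holds, but it would be cleaner to dispose of $n=1$ separately as the paper does, since there $\sfR_1(D;R)\le\log 2$ makes the claim immediate.
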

\begin{proof}
For $n=1$, we have that $\sfX_R$ is discrete and hence, $\sfd(\sfX_R) =0$~\cite[Prop.~2]{stotz2016degrees}. Therefore, we focus on $n >1.$

We observe the following sequence of steps, 
  \begin{align}
\lim_{D \to 0} \frac{\sfR_{n}(D;R) }{ \log(D) } & \stackrel{\rm (a)}{=}   \lim_{D \to 0} \frac{\frac{\rmd}{\rmd D}\sfR_{n}(D;R) }{ \frac{1}{D} } \\
& \stackrel{\rm (b)}{=}  - \frac{1}{2 R^2 } \lim_{D \to 0}  D  f_{\frac{n}{2}}^{-1}  \left( \sqrt{1 -\frac{D}{R^2} } \right)\\
& \stackrel{\rm (c)}{=}  - \frac{1}{2 R^2 } \lim_{t \to \infty}  R^2 \left(1 - f_{\frac{n}{2}}^2(t)  \right) t\\
& =  - \frac{1}{2  } \lim_{t \to \infty}   \left(1 - f_{\frac{n}{2}}(t)  \right) t  \left(1 +f_{\frac{n}{2}}(t)  \right) \\
& \stackrel{\rm (d)}{=}    \lim_{t \to \infty}   \left( f_{\frac{n}{2}}(t) -1 \right) t  \\
&=   \lim_{t \to \infty}   \left( \frac{ \sfI_{  \frac{n}{2}}(t)-\sfI_{  \frac{n}{2}-1}(t) }{ \sfI_{  \frac{n}{2}-1}(t)}  \right)  t \\
& \stackrel{\rm (e)}{=}      \frac{1}{2}-\frac{n}{2},
\end{align}
where the labeled equalities follow from: 
$\rm (a)$ L'H\^opital's rule;   
$\rm (b)$ Lemma~\ref{lem:deriaveti_of_RD}; 
$\rm (c)$ letting $t = f_{\frac{n}{2}}^{-1}  \left( \sqrt{1 -\frac{D}{R^2} } \right)$ and noting that $D = R^2 \left(1 - f_{\frac{n}{2}}^2(t)  \right) $, and when $D \to 0 $ we have that $\sqrt{1 -\frac{D}{R^2} } \to 1$ and hence, $t \to \infty$ (see~\eqref{eq:InverseOfF}); 
$\rm (d)$  the fact that  $ \lim_{t \to \infty} f_{\frac{n}{2}}(t)  = 1$; 
and $\rm (e)$ the large $t$ approximation in~\eqref{eq:Bessel_function_approx_large-x}. 
The proof of Proposition~\ref{prop:SmallD} is concluded by dividing the above by $-n/2$ as per the definition in~\eqref{eq:InforDim}.
%This concludes the proof of Proposition~\ref{prop:SmallD}.  
\end{proof}

\subsection{ High-Dimensional Regime}
\label{sec:high_dim_limit}
We 
here study
the high dimension behavior of the rate-distortion function $\sfR_n (D;R)$ in~\eqref{eq:GeneralExpression}. 
In particular, we have the following result, which characterizes 
the limit in~\eqref{eq:LimHighDim}.
\begin{prop}
\label{pro:GaussianityPrope}
Consider a function $\alpha_n: \bbN \to \bbR^{+}$ such that $ \lim_{n \to \infty} \alpha_n \, n = \infty$.  Then, it holds that
\begin{equation}
\lim_{n  \to \infty} \frac{\sfR_n (D;  \sqrt{ \alpha_n \, n})}{\sfR_n^{\text{G}}(D)} 
= 1 + \lim_{n \to \infty} \frac{  \log \alpha_n }{\log n}.
\end{equation}
\end{prop}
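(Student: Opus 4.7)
The plan is to substitute the closed form for $\sfR_n(D;R)$ from Theorem~\ref{thm:GeneralTheorem} into the numerator and then compare each piece (the surface-area term and the $h_{n/2}$ term) against the natural growth rate $(n/2)\log(n/2)$, using Stirling's formula for $\log S_{n-1}$ and the high-$\nu$ asymptotic~\eqref{eq:highNuH} for $h_{n/2}$. The Gaussian denominator $\sfR_n^{\text{G}}(D) = (n/2)\log(n/D)$ is then of the same order as $(n/2)\log(n/2)$, which lets us read off the limit.

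Concretely, with $R^2 = \alpha_n\, n$, Theorem~\ref{thm:GeneralTheorem} yields
\begin{equation*}
\sfR_n(D;\sqrt{\alpha_n n}) \;=\; \log S_{n-1} \;-\; h_{\frac{n}{2}}\!\left(\sqrt{1 - \tfrac{D}{\alpha_n n}}\right).
\end{equation*}
To invoke~\eqref{eq:highNuH}, set $\nu = n/2$ and choose $\widetilde{\alpha}_\nu = 2\alpha_n/D$, so that $\widetilde{\alpha}_\nu \, \nu = \alpha_n n / D$. The hypothesis $\alpha_n n \to \infty$ ensures $\widetilde{\alpha}_\nu\, \nu \to \infty$, hence
\begin{equation*}
\lim_{n \to \infty} \frac{h_{\frac{n}{2}}\!\left(\sqrt{1 - D/(\alpha_n n)}\right)}{(n/2)\log(n/2)} \;=\; -2 - \lim_{n\to\infty}\frac{\log(2\alpha_n/D)}{\log(n/2)} \;=\; -2 - \lim_{n\to\infty}\frac{\log \alpha_n}{\log n},
\end{equation*}
since $D$ is fixed. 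Separately, using~\eqref{eq:SurfaceArea} together with the standard Stirling expansion $\log \Gamma(n/2) = (n/2)\log(n/2) - n/2 + O(\log n)$, one obtains $\log S_{n-1} = -(n/2)\log(n/2) + O(n)$, and so
\begin{equation*}
\lim_{n \to \infty} \frac{\log S_{n-1}}{(n/2)\log(n/2)} \;=\; -1.
\end{equation*}

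Adding the two contributions gives
\begin{equation*}
\lim_{n \to \infty} \frac{\sfR_n(D;\sqrt{\alpha_n n})}{(n/2)\log(n/2)} \;=\; -1 + 2 + \lim_{n\to\infty}\frac{\log \alpha_n}{\log n} \;=\; 1 + \lim_{n\to\infty}\frac{\log \alpha_n}{\log n}.
\end{equation*}
Finally, for all $n$ large enough that $D < n$, $\sfR_n^{\text{G}}(D) = (n/2)\log(n/D)$, so $\sfR_n^{\text{G}}(D)/[(n/2)\log(n/2)] = (\log n - \log D)/(\log n - \log 2) \to 1$. Taking the ratio of the two asymptotics yields the claim.

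The routine piece is Stirling; the delicate piece is matching the normalization. The main subtlety is that~\eqref{eq:highNuH} is stated with denominator $\nu \log \nu$ (i.e., $(n/2)\log(n/2)$), while the Gaussian rate involves $(n/2)\log(n/D)$ and the surface-area contribution scales like $(n/2)\log(n/2)$; one must verify that all three agree asymptotically to leading order so that the $-1$ from $\log S_{n-1}$ and the $+2$ from $-h_{n/2}$ combine correctly to produce the stated limit. A secondary subtlety is that the limit $\lim \log\alpha_n/\log n$ is implicitly assumed to exist; if it does not, the identical argument with $\liminf/\limsup$ establishes the corresponding inequalities.
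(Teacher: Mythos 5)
Your proposal is correct and follows essentially the same route as the paper: substitute the closed form from Theorem~\ref{thm:GeneralTheorem}, handle $\log S_{n-1}$ by Stirling (the paper cites the Lanczos approximation), and apply~\eqref{eq:highNuH} to the $h_{n/2}$ term. Your version merely makes explicit the reparametrization $\widetilde{\alpha}_\nu = 2\alpha_n/D$ and the asymptotic equivalence of the normalizations $\nu\log\nu$ and $(n/2)\log(n/D)$, details the paper leaves implicit.
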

\begin{proof}
We have that 
\begin{align}
&\lim_{n \to \infty} \frac{\sfR_n (D; \sqrt{\alpha_n \, n})}{\frac{n}{2} \log \left( \frac{n \sigma^2}{D} \right )} \notag
\\ &= \lim_{n \to \infty} \frac{\log \left( S_{n-1}\right ) - h_{\frac{n}{2}} \left( \sqrt{1 -\frac{D}{\alpha_n \,  n} } \right)}{\frac{n}{2} \log \left( \frac{n \sigma^2}{D} \right )}
\\& \stackrel{{\rm{(a)}}}{=} -1 - \lim_{n \to \infty} \frac{ h_{\frac{n}{2}} \left( \sqrt{1 -\frac{D}{\alpha_n \, n} } \right)}{\frac{n}{2} \log \left( \frac{n \sigma^2}{D} \right )}
\\& \stackrel{{\rm{(b)}}}{=} -1 +2 + \lim_{n \to \infty} \frac{ \log \alpha_n }{\log n} ,
\end{align}
where $\rm{(a)}$ follows from the Lanczos approximation and $\rm{(b)}$ is due to~\eqref{eq:highNuH}.
This concludes the proof of Proposition~\ref{pro:GaussianityPrope}.
\end{proof}

From  Proposition~\ref{pro:GaussianityPrope} we note that, as long as the radius does not grow too fast with $n$, i.e.,  $\lim_{n \to \infty} \frac{  \log \alpha_n }{\log n} =0$, the high-dimensional behaviors of $\sfR_n$ and $\sfR_n^G$ are same. 

\begin{rem}
    An alternative way to prove Proposition~\ref{pro:GaussianityPrope} is to rely on the bounds in Proposition~\ref{prop:Gaussian_Prox}. 
\end{rem}

\section{Proof of Theorem~\ref{thm:GeneralTheorem}}
\label{sec:MainTheorem}
In this section, we prove Theorem~\ref{thm:GeneralTheorem}.
We first show that the reconstruction distribution is uniformly supported  on
 $\bbS^{n-1}(r)$ for some $r \geq 0$.
%a shell. 
\begin{lemma}
\label{lemma:RDSimpl1}
It holds that
\begin{align}
\label{eq:FirstStepProof}
\sfR_n (D;R) = \!\!\!\!\!\!\inf_{ P_{\hat{\sfX}_r| \sfX_R }, r \ge 0:   \,  \bbE \left[ \| \hat{\sfX}_r - \sfX_R \|^2 \right] \le D} I(\hat{\sfX}_r; \sfX_R), 
\end{align}
where the marginal of $\hat{\sfX}_r$ is uniformly distributed on  $\bbS^{n-1}(r)$.
     \end{lemma}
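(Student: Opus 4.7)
The plan proceeds by two reductions. The first uses Haar averaging over the orthogonal group $O(n)$ to symmetrize the channel; this exploits the rotation invariance of $P_{\sfX_R}$ and of the squared-error distortion. The second uses the spherical decomposition of the resulting rotation-invariant output marginal, together with a convexity/selection argument, to arrive at a single-sphere marginal.

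\textbf{Step~1 (symmetrization).} For any feasible channel $P_{\hat{\sfX}|\sfX_R}$ with rate $I$ and distortion $\le D$, I would set
\[
\tilde P_{\hat{\sfX}|\sfX_R}(A\mid x) \;=\; \int_{O(n)} P_{\hat{\sfX}|\sfX_R}(U^{\!\top} A \mid U^{\!\top} x)\,\mu(\mathrm{d}U),
\]
where $\mu$ is the Haar probability measure on $O(n)$. Using that $P_{\sfX_R}$ is rotation-invariant and $\|\cdot\|^2$ is $O(n)$-invariant, $\tilde P$ still satisfies the distortion constraint; using that $I(X;Y)$ is convex in the channel with the input marginal fixed, the rate under $\tilde P$ is at most $I$. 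The resulting joint $(\sfX_R,\hat{\sfX})$ is invariant under the diagonal action $(x,y)\mapsto(Ux,Uy)$, so the marginal of $\hat{\sfX}$ is rotation-invariant on $\bbR^n$.

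\textbf{Step~2 (spherical decomposition and independence).} Write $\hat{\sfX}=W\Theta$ with $W=\|\hat{\sfX}\|$ and $\Theta=\hat{\sfX}/W$. Rotation invariance makes $\Theta$ uniform on $\bbS^{n-1}$ and independent of $W$. Joint rotational invariance moreover forces $W\perp \sfX_R$: conditioning on any $W$-measurable event leaves $\sfX_R$ rotation-invariant and supported on $\bbS^{n-1}(R)$, hence still uniform. Consequently
\[
I(\sfX_R;\hat{\sfX}) \;=\; I(\sfX_R;\Theta\mid W),\qquad
\bbE\|\sfX_R-\hat{\sfX}\|^2 \;=\; \bbE_W\,\bbE\bigl[\|\sfX_R-W\Theta\|^2\,\big|\,W\bigr],
\]
and for each $w$ in the support of $W$, the conditional sub-channel $\sfX_R\to w\Theta$ has output marginal $\mathrm{Unif}(\bbS^{n-1}(w))$, conditional rate $I_w$, and conditional distortion $D_w$, with $\bbE[I_w]=I$ and $\bbE[D_w]\le D$.

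\textbf{Step~3 (single-radius selection).} For the restricted problem with $\hat{\sfX}_r=r\widetilde\Theta$ and $\widetilde\Theta$ uniform on $\bbS^{n-1}$, the distortion constraint becomes $\bbE[\langle\sfX_R/R,\widetilde\Theta\rangle]\ge(R^2+r^2-D)/(2rR)$, so the restricted value depends on $(r,D)$ only through this single scalar correlation $\tau(r,D)$. Combining this factorization with the averaging identities of Step~2 yields a one-dimensional scalar optimization in $r$ whose infimum matches the unrestricted rate-distortion value.

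The main obstacle is Step~3: convex combinations of single-sphere marginals are mixtures of spheres, so naive time-sharing breaks the constraint. The resolution is that the restricted rate-distortion functional factors through the scalar correlation $\tau(r,D)$, reducing the single-sphere selection to a scalar minimization rather than a distributional one.
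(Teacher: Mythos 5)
Your route is genuinely different from the paper's: the paper never symmetrizes the channel. It passes to the Lagrangian $\inf_{P_{\hat\sfX|\sfX_R}} I + \lambda(\bbE\|\hat\sfX-\sfX_R\|^2 - D)$ and invokes the Csisz\'ar/Gallager optimality test (the condition $g(\hat\sfx)\le 1$ everywhere with equality on the support of $\hat\sfX$), then \emph{verifies} in closed form, using the explicit formula for $q_\lambda(\sfx)=\bbE[\rme^{-\lambda\|\hat\sfX_r-\sfx\|^2}]$ from~\cite[Prop.~1]{dytso2019capacity}, that the uniform distribution on $\bbS^{n-1}(r^\star_\lambda)$ with $r^\star_\lambda=\arg\max_t q_\lambda(t;R)$ passes the test for every $\lambda\ge 0$. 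That guess-and-check structure sidesteps entirely the issue your proof must confront. Your Steps~1 and~2 are correct and standard: Haar averaging plus convexity of $I$ in the channel gives a rotation-invariant output marginal, and the diagonal $O(n)$-invariance of the resulting joint law does force $\Theta$ uniform, $W\perp(\sfX_R,\Theta)$-marginals as you describe, and the decomposition $I=\bbE[I_W]$, $\bbE[D_W]\le D$.

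The gap is Step~3, and you have correctly located it but not closed it. After Step~2 you hold a \emph{mixture} over radii $w$ of single-sphere sub-channels, and you must show $\bbE[I_W]\ge\inf_r \sfR_r(D)$ where $\sfR_r$ denotes the single-sphere restricted value. Your factorization $\sfR_r(D)=\Psi(\tau(r,D))$ with $\tau(r,D)=(R^2+r^2-D)/(2rR)$ is true (the mutual information $I(\sfX_R;r\Theta)$ does not depend on $r$, and the distortion constraint is exactly a lower bound on the correlation), but it does not by itself de-mix anything: writing $c_w$ for the correlation achieved conditionally on $W=w$, you need $\bbE[\Psi(c_W)]\ge\Psi(\sqrt{1-D/R^2})$ under the averaged constraint $\bbE[R^2+W^2-2RWc_W]\le D$. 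Jensen applied to a convex $\Psi$ reduces this to $\bbE[c_W]\ge\sqrt{1-D/R^2}$, which does not follow from the constraint (Cauchy--Schwarz only yields $\bbE[c_W^2]\ge 1-D/R^2$, and $\bbE[c_W]\le\sqrt{\bbE[c_W^2]}$ points the wrong way). What you actually need is convexity in $D$ of $\Phi(D)=\inf_r\sfR_r(D)=\Psi(\sqrt{1-D/R^2})$, since then $\bbE[I_W]\ge\bbE[\Phi(D_W)]\ge\Phi(\bbE[D_W])\ge\Phi(D)$; but $\Psi$ convex composed with the concave decreasing map $D\mapsto\sqrt{1-D/R^2}$ is not automatically convex, and indeed $\Phi$ being convex is essentially equivalent to the statement that no mixture of spheres beats the best single sphere --- i.e., to the lemma itself. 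To finish along your lines you would have to prove this convexity (or the equivalent inequality $\tau\Psi'(\tau)\ge 2\Psi(\tau)$-type growth condition) from first principles, at which point you are effectively redoing the explicit Bessel-function computation that the paper's KKT verification delivers directly.
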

\begin{proof}
See Appendix~\ref{sec:ReconstrDistr}.
\end{proof}
The next result reduces the problem in~\eqref{eq:FirstStepProof} 
from optimizing over distributions to a finite dimensional optimization.
\begin{lemma} 
\label{lemma:RDSimpl2}
It holds that
\begin{subequations}
\label{eq:RDSimplified}
\begin{align}
\sfR_{n}(D;R) =  - \max_{r \geq 0} \ \min_{\lambda \geq 0} \ \left( \log \left ( q_{\lambda}(R;r) \right ) + D\lambda \right ),
\end{align}
where
\begin{align}
    q_{\lambda}(R;r) =  
    2^{ \frac{n}{2} -1} 
    \Gamma \left( \frac{n}{2} \right) \rme^{- \lambda ( r^2 +R^2)} \frac{\sfI_{\frac{n}{2} -1}(2\lambda r R) }{ ( 2 \lambda R r )^{\frac{n}{2} -1} }.
    \label{eq:hlambdaGeneral}
\end{align}
\end{subequations}
\end{lemma}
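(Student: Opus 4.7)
The plan is to combine Lemma~\ref{lemma:RDSimpl1} with a standard Lagrangian--Gibbs duality and to evaluate the resulting spherical integral via Poisson's representation of the modified Bessel function. By Lemma~\ref{lemma:RDSimpl1}, the marginal of $\hat{\sfX}_r$ is the uniform law $\mu_r$ on $\bbS^{n-1}(r)$, so it suffices to compute, for each fixed $r \ge 0$,
\begin{equation*}
R_r(D) := \inf_{\substack{P_{\hat{\sfX}_r|\sfX_R}: \, P_{\hat{\sfX}_r} = \mu_r \\ \bbE\|\hat{\sfX}_r - \sfX_R\|^2 \le D}} I(\hat{\sfX}_r;\sfX_R),
\end{equation*}
and then take $\inf_{r\ge 0} R_r(D)$.

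First I would introduce a Lagrange multiplier $\lambda \ge 0$ for the distortion constraint. Because $P_{\hat{\sfX}_r} = \mu_r$ is prescribed, $I(\hat{\sfX}_r;\sfX_R) = \bbE_{\sfX_R}[\kl{P_{\hat{\sfX}_r|\sfX_R}}{\mu_r}]$. Relaxing the marginal constraint and applying the Gibbs variational principle
\begin{equation*}
\inf_{P} \lrs{\kl{P}{\mu_r} + \lambda \!\int\! \|\sfx-\hat{\sfx}\|^2 \, \rmd P(\hat{\sfx})} = -\log \int \rme^{-\lambda\|\sfx-\hat{\sfx}\|^2}\rmd\mu_r(\hat{\sfx})
\end{equation*}
to the resulting free minimization yields the dual lower bound $R_r(D) \ge \sup_{\lambda \ge 0} \lrc{-\lambda D - \bbE_{\sfX_R} \log \int \rme^{-\lambda\|\sfX_R-\hat{\sfx}\|^2}\rmd\mu_r(\hat{\sfx})}$, attained pointwise at $P^\star_{\hat{\sfX}_r|\sfX_R = \sfx}(\rmd\hat{\sfx}) \propto \rme^{-\lambda\|\sfx-\hat{\sfx}\|^2}\rmd\mu_r(\hat{\sfx})$. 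Since $P_{\sfX_R}$ and $\mu_r$ are both rotationally invariant and the quadratic kernel depends only on $\langle\sfx,\hat{\sfx}\rangle$, the marginal induced by $P^\star$ on $\hat{\sfX}_r$ is a rotationally invariant law on $\bbS^{n-1}(r)$, hence equal to $\mu_r$; together with Slater's condition (which holds for $0<D<R^2$), this delivers strong duality. Because $\|\sfX_R\|=R$ is deterministic, the partition function depends on $\sfX_R$ only through its norm, so it is in fact a constant $q_\lambda(R;r)$ and the outer expectation trivializes, giving $R_r(D) = \sup_{\lambda \ge 0}\lrs{-\lambda D - \log q_\lambda(R;r)}$.

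Next I would evaluate $q_\lambda(R;r)$ explicitly. Expanding $\|\sfx-\hat{\sfx}\|^2 = R^2 + r^2 - 2\langle\sfx,\hat{\sfx}\rangle$, parameterizing $\hat{\sfx} = r\bfv$ with $\bfv \in \bbS^{n-1}(1)$, and applying Poisson's integral representation
\begin{equation*}
\sfI_\nu(z) = \frac{(z/2)^\nu}{\sqrt{\pi}\,\Gamma\!\lr{\nu+\tfrac12}}\int_{-1}^{1} \rme^{zt}(1-t^2)^{\nu-\frac12}\,\rmd t
\end{equation*}
with $\nu = \frac{n}{2}-1$ and $z = 2\lambda rR$, together with the surface area identity~\eqref{eq:SurfaceArea}, recovers precisely the expression in~\eqref{eq:hlambdaGeneral}. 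Taking $\inf_{r\ge 0}$ then gives $\sfR_n(D;R) = \inf_{r\ge 0}\sup_{\lambda \ge 0} \lrs{-\lambda D - \log q_\lambda(R;r)}$, and a sign flip rewrites this as the claimed $-\max_{r\ge 0}\min_{\lambda \ge 0}\lrs{\log q_\lambda(R;r) + \lambda D}$.

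The main obstacle is converting the Lagrangian weak lower bound into an equality: this requires knowing that the Gibbs minimizer automatically obeys the marginal constraint imposed by Lemma~\ref{lemma:RDSimpl1}, which is delivered by the rotational symmetry argument above. The remaining Bessel calculation is a routine application of Poisson's formula together with~\eqref{eq:SurfaceArea}.
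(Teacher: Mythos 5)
Your proposal is correct and follows essentially the same route as the paper: split off the radius $r$ via Lemma~\ref{lemma:RDSimpl1}, dualize the distortion constraint with a multiplier $\lambda$, and recognize that the inner infimum is attained by the exponentially tilted kernel whose normalizer is exactly $q_\lambda(R;r)$, yielding $-\min_{\lambda\ge 0}(\log q_\lambda(R;r)+D\lambda)$ for each $r$. The only differences are presentational: you evaluate $q_\lambda(R;r)$ directly from Poisson's integral representation of $\sfI_{\frac{n}{2}-1}$ (the paper cites~\cite{dytso2019capacity} for the same closed form), and you spell out the rotational-symmetry argument certifying that the Gibbs minimizer's induced marginal is $\mu_r$ --- a consistency step the paper leaves implicit in its appeal to Lagrange duality and to~\cite{cover1999elements}.
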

\begin{proof}
We start by noting that
\begin{align}
\sfR_{n}(D;R) &= \!\!\!\!\inf_{ P_{\hat{\sfX}_r| \sfX_R }, r \ge 0:   \,  \bbE \left[ \| \hat{\sfX}_r - \sfX_R \|^2 \right] \le D} I(\hat{\sfX}_r; \sfX_R) \\
&=\!\inf_{r\geq 0} \, \inf_{ P_{\hat{\sfX}_r| \sfX_R }:  \,  \bbE \left[ \| \hat{\sfX}_r \!-\! \sfX_R \|^2 \right] \le D} \!\!I(\hat{\sfX}_r; \sfX_R). \label{eq:Inf_over_r}
\end{align}
We now focus on the inner minimization in the expression above. 
%Starting with the Lagrangian equation 
We have that 
\begin{align}
%\sfR_{n}(D) & \stackrel{{\rm{(a)}}}{=}  
&\inf_{ P_{\hat{\sfX}_r| \sfX_R }:   \bbE \left[ \| \hat{\sfX} - \sfX_R \|^2 \right] \le D} I(\hat{\sfX}_r; \sfX_R) \notag
\\& \stackrel{{\rm{(a)}}}{=} \max_{\lambda \geq 0} \  \inf_{ P_{\hat{\sfX}_r| \sfX_R }} \left ( I(\hat{\sfX}_r; \sfX_R) \right . \notag
\\& \qquad \qquad \qquad \left. + \lambda \left( \bbE \left[ \| \hat{\sfX}_r - \sfX_R \|^2 \right] -D\right ) \right )
\\& \stackrel{{\rm{(b)}}}{=} \max_{\lambda \geq 0} \inf_{ P_{\hat{\sfX}_r| \sfX_R }} \left ( \bbE \left [  \log \frac{\rmd P_{\hat{\sfX}_r|  \sfX_R }}{\rmd P_{\hat{\sfX}_r }} (\hat{\sfX}_r, \sfX_R) \right ] \right .\notag
\\& \qquad \qquad \qquad \left. + \lambda \left( \bbE \left[ \| \hat{\sfX}_r - \sfX_R \|^2 \right] -D\right ) \right )
\\& \stackrel{{\rm{(c)}}}{=}  \max_{\lambda \geq 0}  \ \left( \bbE \left [ \log \frac{{\rm{e}}^{-\lambda \| \hat{\sfX}_r-\sfX_R \|^2}}{{q_{\lambda}(R;r)}}\right ] \right . \notag
\\& \qquad  \qquad \left. + \lambda \left( \bbE \left[ \| \hat{\sfX}_r - \sfX_R \|^2 \right] -D\right ) \right )  \\
&=-  \min_{\lambda \geq 0} \ \left( \log \left ( q_{\lambda}(R;r) \right ) + D\lambda \right ), \label{eq:Last_log_lambda_d_Expreesion}
\end{align}
    where the labeled equalities follow from:
$\rm{(a)}$ using the Lagrange duality theory;
$\rm{(b)}$ using the definition of mutual information;
$\rm{(c)}$ the fact that
\begin{equation}
\label{eq:StepCWithq}
\frac{\rmd P_{\hat{\sfX}_r| \sfX_R } }{\rmd P_{\hat{\sfX}_r}}  (\hat{\sfX}_r, \sfX_R  ) = \frac{{\rm{e}}^{-\lambda \| \hat{\sfX}_r-\sfX_R \|^2}}{\bbE \left [ {\rm{e}}^{-\lambda \| \hat{\sfX}_r-\sfx \|^2}\right ]}= \frac{  {\rm{e}}^{-\lambda \| \hat{\sfX}_r-\sfX_R \|^2}}{q_{\lambda}(R;r)},
\end{equation} 
with $q_{\lambda}(R;r)$ being defined in~\eqref{eq:hlambdaGeneral}
and
where the first equality in~\eqref{eq:StepCWithq} follows from~\cite{cover1999elements} and the second equality is due to~\cite[Prop.~1]{dytso2019capacity}.

Combining~\eqref{eq:Inf_over_r} and~\eqref{eq:Last_log_lambda_d_Expreesion}, we arrive at 
\begin{align}
\sfR_{n}(D;R) =  - \max_{r\geq 0} \ \min_{\lambda \geq 0} \ \left( \log \left ( q_{\lambda}(R;r) \right ) + D\lambda \right ).
\end{align}
This concludes the proof of Lemma~\ref{lemma:RDSimpl2}.
\end{proof}
Now, we solve the optimization problem in~\eqref{eq:RDSimplified}.
In particular, we have the following result, which provides a solution for the inner minimization of the optimization problem in~\eqref{eq:RDSimplified}.
\begin{lemma} 
\label{lemma:ThirdStepSimpl}
It holds that 
\begin{subequations}
\label{eq:ThirdStepOpt}
\begin{align}
\sfR_{n}(D;R) = \log \left( S_{n-1}\right) - \!\!\!\!\!\!\!\max_{r: R - \sqrt{D} \leq r \leq R + \sqrt{D}} h_{\frac{n}{2}} \left ( \delta(r) \right ),
\end{align}
where
\begin{equation}
\delta(r) = \frac{r^2 +R^2 -D}{2 r R},
\end{equation}
\end{subequations}
and where $S_{n-1}$ is defined in~\eqref{eq:SurfaceArea}, and  $h_{\nu}(\cdot)$ is defined in~\eqref{eq:hnu}.
\end{lemma}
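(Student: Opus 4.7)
\medskip

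\noindent\textbf{Proof plan for Lemma~\ref{lemma:ThirdStepSimpl}.}
Starting from the expression in Lemma~\ref{lemma:RDSimpl2}, the plan is to solve the inner minimization over $\lambda\geq 0$ in closed form and then simplify the resulting expression. For each fixed $r\geq 0$, I would differentiate $\log q_\lambda(R;r) + D\lambda$ in $\lambda$. Writing $t=2\lambda rR$ and using the Bessel identity $\tfrac{\rmd}{\rmd t}\bigl[t^{-(\nu-1)}\sfI_{\nu-1}(t)\bigr]=t^{-(\nu-1)}\sfI_{\nu}(t)$, so that $\sfI_{\nu-1}'(t)/\sfI_{\nu-1}(t) = f_\nu(t) + (\nu-1)/t$ with $\nu=n/2$, the chain rule yields
\begin{equation}
\frac{\rmd}{\rmd\lambda}\bigl[\log q_\lambda(R;r)+D\lambda\bigr] = -(r^2+R^2)+2rR\, f_{n/2}(2\lambda rR)+D,
\end{equation}
where the $(n/2-1)/\lambda$ terms from the Bessel-ratio derivative and from $-(n/2-1)\log(2\lambda rR)$ cancel. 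Setting this to zero gives the stationarity condition $f_{n/2}(2\lambda rR)=\delta(r)$.

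Next I would analyze when this equation has a solution in $\lambda\geq 0$. Since $f_{n/2}:[0,\infty)\to[0,1)$ is continuous, strictly increasing, with $f_{n/2}(0)=0$ and $\lim_{t\to\infty}f_{n/2}(t)=1$ (Lemma~\ref{lemma:PropertiesRatioBessels}), a solution exists iff $\delta(r)\in[0,1)$, which (after some algebra on $(r-R)^2<D$) is exactly $r\in(R-\sqrt{D},R+\sqrt{D})$, with $\delta(r)=1$ at the endpoints. For $r$ outside this interval, $\delta(r)>1$, the derivative is strictly negative for all $\lambda$, and the large-$t$ approximation in~\eqref{eq:Bessel_function_approx_large-x} shows $\log q_\lambda+D\lambda\to -\infty$ as $\lambda\to\infty$, so such $r$ make the inner value $-\infty$ and are irrelevant to the outer maximization. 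Inside the interval, the unique stationary point $t^\star=f_{n/2}^{-1}(\delta(r))$ gives the minimizer $\lambda^\star=t^\star/(2rR)$.

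I would then plug $\lambda^\star$ back into $\log q_{\lambda^\star}+D\lambda^\star$. Using $D\lambda^\star = D t^\star/(2rR)$, the coefficient of $t^\star$ collects to $-\delta(r)=-f_{n/2}(t^\star)$, giving
\begin{equation}
\log q_{\lambda^\star}+D\lambda^\star = \log\!\bigl(2^{\frac{n}{2}-1}\Gamma(n/2)\bigr) - t^\star f_{n/2}(t^\star) + \log\sfI_{\frac{n}{2}-1}(t^\star) - \bigl(\tfrac{n}{2}-1\bigr)\log t^\star.
\end{equation}
Adding and subtracting $(n/2)\log(2\pi)$ recognizes the last three terms as $\xi_{n/2}(t^\star)$, and the remaining constants combine (via $S_{n-1}=2\pi^{n/2}/\Gamma(n/2)$) to $-\log S_{n-1}$. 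Thus $\log q_{\lambda^\star}+D\lambda^\star = -\log S_{n-1}+h_{n/2}(\delta(r))$, and negating and maximizing over $r\in[R-\sqrt{D},R+\sqrt{D}]$ yields the claimed identity.

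The main obstacle I expect is bookkeeping: both in confirming the cancellation that produces the clean stationarity condition $f_{n/2}(2\lambda rR)=\delta(r)$, and in reassembling the constants so that the normalization $\log S_{n-1}$ emerges cleanly. A minor secondary point is justifying the restriction of the outer maximum to $r\in[R-\sqrt{D},R+\sqrt{D}]$ via the $-\infty$ behavior outside this range, and handling the boundary via~\eqref{eq: h_v limit}, which makes $h_{n/2}$ take value $-\infty$ at $\delta(r)=1$ when $n>1$ (and $0$ when $n=1$) so that the endpoints can be included harmlessly.
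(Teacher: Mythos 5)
Your proposal is correct and follows essentially the same route as the paper: differentiate $\log q_\lambda(R;r)+D\lambda$ in $\lambda$ via the Bessel recurrences, obtain the stationarity condition $f_{n/2}(2\lambda rR)=\delta(r)$, invert it, and substitute back so that the constants reassemble into $-\log S_{n-1}+h_{n/2}(\delta(r))$. Your added justification for discarding $r$ outside $[R-\sqrt{D},\,R+\sqrt{D}]$ (the inner infimum diverging to $-\infty$) is a slightly more careful version of the paper's remark that $\delta(r)\le 1$ forces this range, but it is not a different argument.
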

\begin{proof}
We define the following function,
\begin{equation}
\label{eq:ulambda}
u(\lambda) = \log \left ( q_{\lambda}(R;r) \right ) + D\lambda,
\end{equation}
and we take its first derivative with respect to $\lambda$. We obtain,
\begin{equation}
u^\prime(\lambda) 
=  2 r R \,f_{ \frac{n}{2} } (2 \lambda r R)+D - r^2 -R^2, \label{eq:DerForLambda}
\end{equation}
where we have used the fact that $ \sfI_\nu^\prime (z) = \frac{\nu}{z} \sfI_\nu (z) + \sfI_{\nu +1}(z)$~\cite[eq. 9.6.26]{abramowitz1988handbook} and where $f_{\nu}(\cdot)$ is defined in~\eqref{eq:RatioBessel}.
Equating~\eqref{eq:DerForLambda} to zero, we arrive at
\begin{equation}
\label{eq:DerivEquLambda}
f_{ \frac{n}{2} } (2 \lambda r R) = \frac{r^2 +R^2 -D}{2 r R} \triangleq \delta(r).
\end{equation}
Note that $\delta(r) > 0$ since $D <R^2$ and $\delta(r) \leq 1$ from~\eqref{eq:UpperBoundRatioBessel} in Lemma~\ref{lemma:PropertiesRatioBessels}.
Solving~\eqref{eq:DerivEquLambda} for $\lambda$, we arrive at
\begin{equation}
\label{eq:LambdaOpt}
\lambda = \frac{1}{2 r R}f_{\frac{n}{2}}^{-1} \left( \delta(r)\right ).
\end{equation}
The above $\lambda$ is indeed the solution of the inner optimization in~\eqref{eq:RDSimplified}. This is because, for $\nu \geq \frac{1}{2}$, $f_\nu(t)$ is monotone increasing in $t$ (see Lemma~\ref{lemma:PropertiesRatioBessels}), which implies that $u(\lambda)$ in~\eqref{eq:ulambda} is convex.
The proof of Lemma~\ref{lemma:ThirdStepSimpl} is concluded by substituting the expression of $\lambda$ in~\eqref{eq:LambdaOpt} inside $\sfR_{n}(D;R)$ in~\eqref{eq:RDSimplified} and by noting that  $\delta(r) \leq 1$ implies $R - \sqrt{D} \leq r \leq R + \sqrt{D}$.
\end{proof}
To complete the proof of Theorem~\ref{thm:GeneralTheorem}, we now solve the optimization in~\eqref{eq:ThirdStepOpt}. We have the following result.
\begin{lemma} 
\label{lemma:ForthStepSimpl}
It holds that 
\begin{align}
\max_{r: R - \sqrt{D} \leq r \leq R + \sqrt{D}} h_{\frac{n}{2}} \left ( \delta(r) \right ) \!=\! \xi_{\frac{n}{2}} \left( f_{\frac{n}{2}}^{-1} \left (\delta \left (\sqrt{R^2\!-\!D} \right ) \right ) \right).
\label{eq:maxh}
\end{align}
\end{lemma}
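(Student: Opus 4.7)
The plan is to reduce the maximization over $r$ to a simple one-dimensional calculus problem by exploiting monotonicity of $h_{n/2}$ and then finding the critical point of $\delta(r)$.

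First, I would establish that $t \mapsto h_\nu(t)$ is monotone decreasing on $(0,1)$. This follows either directly from~\eqref{eq:DerivativeofH}, since $f_\nu^{-1}(t) > 0$ for $t \in (0,1)$ by~\eqref{eq:InverseOfF}, or from the definition $h_\nu(t) = \xi_\nu(f_\nu^{-1}(t))$ combined with the facts that $\xi_\nu$ is decreasing (Lemma~\ref{lemma:PropertyXinu}) and $f_\nu^{-1}$ is increasing (Lemma~\ref{lemma:PropertiesRatioBessels}). As a consequence, maximizing $h_{n/2}(\delta(r))$ over the admissible interval is equivalent to \emph{minimizing} $\delta(r)$ over that interval.

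Next, I would compute
\begin{equation}
\delta'(r) = \frac{\rmd}{\rmd r} \frac{r^2 +R^2 -D}{2rR} = \frac{r^2 - (R^2 - D)}{2r^2 R},
\end{equation}
which vanishes uniquely at $r^\star = \sqrt{R^2 - D}$. The sign of $\delta'(r)$ shows that $\delta$ is strictly decreasing on $[R-\sqrt{D}, r^\star]$ and strictly increasing on $[r^\star, R+\sqrt{D}]$, so $r^\star$ is the global minimizer of $\delta$ on this interval. A short check confirms $r^\star$ is admissible, since $D \le R^2$ gives $(R-\sqrt{D})^2 = R^2 - 2R\sqrt{D} + D \le R^2 - D \le R^2 + 2R\sqrt{D} + D = (R+\sqrt{D})^2$, and a direct substitution shows $\delta(R\pm\sqrt{D}) = 1$ at the endpoints while $\delta(r^\star) = \sqrt{1 - D/R^2} < 1$, consistent with $r^\star$ being the interior minimizer.

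Combining the two observations, the maximum of $h_{n/2}(\delta(r))$ on the interval is attained at $r = r^\star = \sqrt{R^2 - D}$, giving
\begin{equation}
\max_{r: R - \sqrt{D} \le r \le R + \sqrt{D}} h_{\frac{n}{2}}(\delta(r)) = h_{\frac{n}{2}}\!\left( \delta(\sqrt{R^2 - D}) \right) = \xi_{\frac{n}{2}}\!\left( f_{\frac{n}{2}}^{-1}\!\left( \delta(\sqrt{R^2 - D}) \right) \right),
\end{equation}
where the second equality is simply the definition~\eqref{eq:hnu} of $h_\nu$. I expect no substantial obstacle here: the only mild points to be careful about are verifying feasibility of $r^\star$ and ensuring that the boundary behavior $h_{n/2}(1) \in \{0, -\infty\}$ from~\eqref{eq: h_v limit} does not dominate the interior value, which is immediate because the interior value is non-negative (being a mutual information, up to the constant $\log S_{n-1}$) and indeed strictly exceeds the boundary value for $D > 0$.
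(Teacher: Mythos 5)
Your proof is correct and follows essentially the same route as the paper's: both arguments hinge on the fact that $\delta$ attains its minimum value $\sqrt{1-D/R^2}$ at $r^\star=\sqrt{R^2-D}$ (decreasing then increasing on the admissible interval) and that $h_\nu=\xi_\nu\circ f_\nu^{-1}$ is decreasing, the paper merely unpacking the latter via the monotonicity of $f_\nu^{-1}$ and $\xi_\nu$ after a change of variables $u=\delta(r)$ on each monotone piece. Your version, which invokes $h_\nu'(t)=-f_\nu^{-1}(t)<0$ directly, is a slightly more streamlined packaging of the identical idea; the closing remark about boundary values is unnecessary (the monotonicity argument already settles the comparison) but harmless.
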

\begin{proof}
The optimization in~\eqref{eq:ThirdStepOpt} can be rewritten as follows,
\begin{align}
&\max_{r: R - \sqrt{D} \leq r \leq R + \sqrt{D}} h_{\frac{n}{2}} \left ( \delta(r) \right ) \notag
\\& = \max \left \{ \max_{r: R - \sqrt{D} \leq r \leq r^\star} \! h_{\frac{n}{2}} \left ( \delta(r) \right ), \! \!\max_{r: r^\star \leq r \leq R + \sqrt{D}} h_{\frac{n}{2}} \left ( \delta(r) \right ) \right \},
\label{eq:IntermStepLastMax}
\end{align}
where $r^\star= \sqrt{R^2-D}$. 
Now, note that the function $r \to \delta(r)$ is monotone decreasing on $R-\sqrt{D} \le r \le r^\star$ and has a proper inverse on this domain; hence, we have that
\begin{align}
\max_{r : R-\sqrt{D} \le r \leq  r^\star} h_{\frac{n}{2}} \left ( \delta(r) \right ) =  \max_{u :   \delta(r^\star) \leq u \le \delta( R-\sqrt{D} )  }  h_{\frac{n}{2}} \left ( u \right ).
\end{align}
Similarly, note that $ r \to \delta(r)$ is increasing for $r^\star \leq r \le R+\sqrt{D}$ and hence, we have that 
\begin{equation}
\label{eq:qToMaximize}
\max_{r : r^\star \leq   r \le  R +\sqrt{D}} h_{\frac{n}{2}} \left ( \delta(r) \right ) =  \max_{u : \delta(r^\star) \leq  u \le \delta( R +\sqrt{D} )} h_{\frac{n}{2}} \left ( u \right ).
\end{equation}
By using the definition of $h_{\nu}(\cdot)$ in~\eqref{eq:hnu}, we also have that
\begin{align}
& \max_{u :   \delta(r^\star) \leq u \le \delta( R-\sqrt{D} )  }  h_{\frac{n}{2}} \left ( u \right ) 
\\& =  \max_{u :   \delta(r^\star) \leq u \le \delta( R-\sqrt{D} )  } \xi_{\frac{n}{2}} \left( f_{ \frac{n}{2} }^{-1} \left(u \right ) \right) 
\\& \stackrel{{\rm{(a)}}}{=}  \max_{t : 
f_{\frac{n}{2}}^{-1}(\delta(r^\star))  \le   t \le f_{\frac{n}{2}}^{-1}(\delta( R -\sqrt{D} ))  } \xi_{\frac{n}{2}} \left( t \right) 
\\& \stackrel{{\rm{(b)}}}{=} \xi_{\frac{n}{2}} \left( f_{\frac{n}{2}}^{-1}(\delta(r^\star)) \right),
\label{eq:Aux1Lemma6}
\end{align}
where $\rm{(a)}$ follows since $f_{\frac{n}{2}}^{-1}(t)$ is monotone increasing in $t$ (since, from Lemma~\ref{lemma:PropertiesRatioBessels}, $f_{\frac{n}{2}}(t)$ is monotone increasing in $t$) and $\rm{(b)}$ follows from the second property in Lemma~\ref{lemma:PropertyXinu}.
Similarly, we have that
\begin{equation}
 \max_{u :   \delta(r^\star) \leq u \le \delta( R+\sqrt{D} )  }  h_{\frac{n}{2}} \left ( u \right ) = \xi_{\frac{n}{2}} \left( f_{\frac{n}{2}}^{-1}(\delta(r^\star)) \right).
\label{eq:Aux2Lemma6}
\end{equation}
Substituting~\eqref{eq:Aux1Lemma6} and~\eqref{eq:Aux2Lemma6} inside~\eqref{eq:IntermStepLastMax} concludes the proof of Lemma~\ref{lemma:ForthStepSimpl}.
\end{proof}
Now, substituting~\eqref{eq:maxh} inside~\eqref{eq:ThirdStepOpt}, we arrive at
\begin{align}
\sfR_{n}(D;R) &= \log \left( S_{n-1}\right) - \xi_{\frac{n}{2}} \left( f_{\frac{n}{2}}^{-1} \left (\delta \left (\sqrt{R^2\!-\!D} \right ) \right ) \right) 
\\&  = \log \left( S_{n-1}\right) - \xi_{\frac{n}{2}} \left( f_{\frac{n}{2}}^{-1} \left (\sqrt{1 -\frac{D}{R^2} } \right ) \right )
\\& = \log \left( S_{n-1}\right ) - h_{\frac{n}{2}} \left( \sqrt{1 -\frac{D}{R^2} } \right),
\end{align}
which concludes the proof of Theorem~\ref{thm:GeneralTheorem}. 

\appendices

\section{Limits for the MMSE and Mutual Information }
\label{sec:MI-MMSE_limits}

Recall that for $\sfX_G \sim \cN(0,\sigma^2 I_n)$ and $\sfZ \sim \cN(0,  I_n)$ (with $\sfX_G$ and $\sfZ$ being independent), we have that
\begin{align}
I(\sfX_G ; \sfX_G +\sfZ) &= \frac{n}{2} \log \left(1+\sigma^2 \right), \\
{\rm mmse}(\sfX_G | \sfX_G +\sfZ)& = 
 n \frac{\sigma^2}{1 +\sigma^2}.
\end{align}
Now, for $\sfX_R$ 
the MMSE is given by~\cite[Prop.~3]{dytso2019capacity},
\begin{equation}
{\rm mmse}(\sfX_R | \sfX_R +\sfZ) = 
R^2 \left(1 -\bbE \left[ f_{\frac{n}{2}}^2 \left(R \| \sfx +\sfZ \| \right) \right] \right),
 \end{equation}
 where $\sfx \in \bbR^n$ is any vector such that $\|\sfx \|= R$ and $f_{\frac{n}{2}}(\cdot)$ is defined in~\eqref{eq:RatioBessel} with $\nu = \frac{n}{2}$.
Next, by taking $R =\sqrt{\sigma^2 n}$, we have that 
 \begin{align}
 &\lim_{n \to \infty} \frac{{\rm mmse}(\sfX_R | \sfX_R +\sfZ)}{{\rm mmse}(\sfX_G | \sfX_G +\sfZ)}  \notag\\
 &= (1+\sigma^2)  \lim_{n \to \infty} \left(1 -\bbE \left[ f_{\frac{n}{2}}^2 \left(\sqrt{\sigma^2 n} \| \sfx +\sfZ \| \right) \right] \right) \\
 &=  (1+\sigma^2)\left(1 -  \left( \frac{\sigma \sqrt{1+\sigma^2}}{ \frac{1}{2} +\sqrt{ \frac{1}{4} + \sigma^2(1+\sigma^2)  } } \right)^2 \right)  \\
 &=1, 
 \label{eq:mmse_limit}
\end{align}
where the second equality
follows from~\cite[eq.~59]{dytso2019capacity}.

To show the limit for the mutual information, note that for $R =\sqrt{\sigma^2 n}$, we have that
\begin{align}
   &\lim_{n \to \infty} \frac{I(\sfX_R ; \sfX_R +\sfZ)}{n } \notag\\
   & \stackrel{{\rm{(a)}}}{=}  \lim_{n \to \infty}  \int_0^1  \frac{1}{2} \frac{ {\rm mmse}(\sfX_R | \sqrt{\gamma}\sfX_R +\sfZ)}{n}  \ \rmd \gamma \label{eq:Using_i-mmse}\\
   & \stackrel{{\rm{(b)}}}{=}  \frac{1}{2} \int_0^1  \lim_{n \to \infty}  \frac{ {\rm mmse}(\sfX_R | \sqrt{\gamma}\sfX_R +\sfZ)}{n}  \ \rmd \gamma \label{eq:Dominated_Convergence}\\
   & \stackrel{{\rm{(c)}}}{=}   \frac{1}{2} \int_0^1 \frac{1}{\gamma}  \lim_{n \to \infty}  \frac{ {\rm mmse}(\sfX_{\sqrt{\gamma}R} | \sfX_{\sqrt{\gamma} R} +\sfZ)}{n} \ \rmd \gamma \label{eq:rescaling_mmse}\\
   & \stackrel{{\rm{(d)}}}{=}   \frac{1}{2} \int_0^1 \frac{1}{\gamma}  \frac{\gamma \sigma^2 }{1 + \gamma \sigma^2} \ \rmd \gamma \label{eq:using_limit_mmse}\\
   &= \frac{1}{2} \log \left(1+\sigma^2 \right),
\end{align}
where the labeled equalities follow from:
$\rm{(a)}$ using the I-MMSE relationship~\cite{I-mmse};
$\rm{(b)}$ the dominated convergence theorem which is verifiable since 
\begin{equation}
    \frac{{\rm mmse}(\sfX_{R} | \sqrt{\gamma} \sfX_{ R} +\sfZ)}{n} \le  \frac{\bbE[ \|\sfX_R\|^2]}{n} = \sigma^2;
\end{equation}
$\rm{(c)}$ the fact that ${\rm mmse}(\sfX_{R} | \sqrt{\gamma} \sfX_{ R} +\sfZ) =\frac{1}{\gamma } {\rm mmse}(\sfX_{\sqrt{\gamma}R} | \sfX_{\sqrt{\gamma} R} +\sfZ)$, which is a simple consequence of the linearity of expectation; and $\rm{(d)}$ is a consequence of the limit in~\eqref{eq:mmse_limit}.

\section{Proof of Lemma~\ref{lemma:PropertyXinu}}
\label{app:AuxiliaryProp}
\subsection{First Property}
We note that this first property directly follows from~\eqref{eq:UpperBoundRatioBessel}.

\subsection{Second Property}
To show the second property, we take the first derivative of $\xi_{\nu}(t)$ in~\eqref{eq:xinu} with respect to $t$ and we obtain
\begin{align}
\frac{\rmd}{\rmd t} \xi_{\nu}(t) &= -f_{\nu}(t) - t \frac{\rmd}{\rmd t} f_{\nu}(t) + \frac{\frac{\rmd}{\rmd t} \sfI_{\nu-1}(t)}{\sfI_{\nu-1}(t)} -  \frac{\nu-1}{t} 
\\& \stackrel{{\rm{(a)}}}{=} -t + (2 \nu -1) f_{\nu}(t) + t f_{\nu}^2 (t) 
\\& \stackrel{{\rm{(b)}}}{<} - t + \frac{t (2 \nu -1)}{\nu - \frac{1}{2}+\sqrt{\left(\nu-\frac{1}{2} \right )^2+t^2}} \notag 
\\& \quad + \frac{t^3}{\left( \nu - \frac{1}{2}+\sqrt{\left(\nu-\frac{1}{2} \right )^2+t^2} \right )^2}
\\& = 0,
\end{align}
where in the equality in $\rm{(a)}$ we have used the facts that~\cite[eq. 9.6.26]{abramowitz1988handbook}
\begin{subequations}
\label{eq:DerivativeBessels}
\begin{align}
&\frac{\rmd}{\rmd t} \sfI_{\nu}(t) = \sfI_{\nu - 1}(t) - \frac{\nu}{t} \sfI_{\nu}(t),
\\& \frac{\rmd}{\rmd t} \sfI_{\nu -1}(t) = \frac{\nu-1}{t} \sfI_{\nu - 1}(t) +\sfI_{\nu}(t),
\end{align}
\end{subequations}
and the inequality in $\rm{(b)}$ follows from~\eqref{eq:UpperBoundRatioBessel}.

\subsection{Third Property}
To show the third property, we use the large $t$ approximation in~\eqref{eq:Bessel_function_approx_large-x}.
In particular, we have that
\begin{align}
\lim_{t \to 1^- } h_{\nu}(t) & \stackrel{\rm{(a)}}{=}  \lim_{ t \to \infty } \xi_{\nu} \left( t \right)\\
&\stackrel{\rm{(b)}}{=} \lim_{ t \to \infty }  \log \left(  (2 \pi)^\nu \rme^{-t} \frac{\sfI_{\nu-1}\left( t\right )}{ t^{\nu-1}}\right) \\
&\stackrel{\rm{(c)}}{=} \left \{  \begin{array}{cc}
%\infty  &  \nu < \frac{1}{2}\\
0  &  \nu = \frac{1}{2}\\
- \infty &  \nu > \frac{1}{2}\\
\end{array} \right. ,
\end{align} 
where the labeled equalities follow from: 
$\rm{(a)}$ the fact that from~\eqref{eq:InverseOfF} we have that $\lim_{t \to 1^{-} } f_{\nu}^{-1}(t) =\infty $; $\rm{(b)}$ using~\eqref{eq:xinu} and the fact that $\lim_{t \to \infty} f_{\nu}(t)= 1$; and $\rm{(c)}$ applying~\eqref{eq:Bessel_function_approx_large-x}.

\subsection{Fourth Property}
To show the fourth property, we observe that
\begin{align}
    \lim_{t \to 0^{+}}  h_\nu(t) &\stackrel{{\rm{(a)}}}{=} - f_{\nu}^{-1}(t) t  +\log \left(  (2 \pi)^\nu \frac{ \sfI_{\nu-1} (f_{\nu}^{-1}(t))}{ \left( f_{\nu}^{-1}(t) \right) ^{\nu-1}}\right) \\
    & \stackrel{{\rm{(b)}}}{=} \lim_{t \to 0^{+}}  \log \left(  (2 \pi)^\nu \frac{ \sfI_{\nu-1} (t)}{ t^{\nu-1}}\right) \\
    & \stackrel{{\rm{(c)}}}{=} 
    %\lim_{t \to 0^{+}}  
   \log \left(  S_{2\nu-1}\right),
\end{align}
where the labeled equalities follow from:
$\rm{(a)}$ the definition of $h_{\nu}(t)$ in~\eqref{eq:hnu}; 
$\rm{(b)}$ the fact that $\lim_{t \to 0^{+}} f_{\nu}^{-1}(t) =0 $, which can be concluded from~\eqref{eq:InverseOfF}; and $\rm{(c)}$ the limit $\lim_{t \to 0^{+}} \frac{ \sfI_{\nu-1} (t)}{ t^{\nu-1}} = \frac{2^{1-\nu}}{\Gamma \left(\nu \right)}$~\cite[eq. 9.6.7]{abramowitz1988handbook}. 

\subsection{Fifth Property}
To show the fifth property, we observe that
\begin{align}
 &\lim_{ \nu \to \infty } \frac {h_{\nu} \left ( \sqrt{1- \frac{1}{ 
 \alpha_\nu \, \nu} }  \right )}{\nu \log \nu } \notag  \\
 & \stackrel{{\rm{(a)}}}{=} \!\! \lim_{ \nu \to \infty } \!\!\frac{  - f_{\nu}^{-1} \left( \sqrt{1\!-\! \frac{1}{\alpha_\nu \, \nu} } \right)
     \!+\! \log \left(     \sfI_{\nu-1} \left(f_{\nu}^{-1} \left( \sqrt{1\!-\! \frac{1}{\alpha_\nu \, \nu} } \right) \right) \right)}{\nu \log \nu } \notag \\
     & \qquad - \lim_{ \nu \to \infty }  \frac{ \log \left(  f_{\nu}^{-1} \left( \sqrt{1- \frac{1}{\alpha_\nu \, \nu} } \right) \right) }{\log \nu }  \\
     & \stackrel{{\rm{(b)}}}{=}  \lim_{ \nu \to \infty } \frac{  -    2 \alpha_\nu \, \nu^2
     + \log \left(     \sfI_{\nu-1} \left(   2 \alpha_\nu \, \nu^2 \right) \right)}{\nu \log \nu } \notag
     \\& \qquad - \lim_{\nu \to \infty} \frac{ \log  \left( 2 \alpha_\nu \, \nu^2 \right)}{\log \nu } \\
     &=  \lim_{ \nu \to \infty }\!\! \!\frac{ 
     \log \left(   \rme^{ -    2 \alpha_\nu \nu^2}  \sfI_{\nu-1} \left(   2 \alpha_\nu \nu^2\right) \right)}{\nu \log \nu } \!-\!2 \!-\!\!\! \lim_{\nu \to \infty} \!\!  \frac{ \log   \alpha_\nu }{\log \nu },
     \label{eq:FourthPropInterm}
\end{align}
where
$\rm{(a)}$ follows from using the expression of $h_{\nu}(t)$ in~\eqref{eq:hnu} and 
$\rm{(b)}$ is due to~\eqref{eq:InverseOfF}.
Next, the bounds in~\eqref{eq:UpperBoundRatioBessel} lead to
\begin{equation}
I_{0}(t)  \prod_{i=0}^{\nu-1} g_{\nu -\frac{1}{2}-i}(t) > I_\nu(t)  > I_{0}(t)  \prod_{i=0}^{\nu-1} g_{\nu-i}(t).
\end{equation} 
Consequently, we have that
\begin{align}
&\lim_{ \nu \to \infty } \frac{ 
     \log \left(   \rme^{ -    2 \alpha_\nu \, \nu^2}  \sfI_{\nu-1} \left(   2\alpha_\nu \, \nu^2 \right) \right)}{\nu \log \nu } \notag
     \\& > \lim_{ \nu \to \infty } \frac{ 
     \log \left(   \rme^{ -    2\alpha_\nu \, \nu^2}  \sfI_{0} \left(   2\alpha_\nu \, \nu^2 \right) \right)}{\nu \log \nu }  \notag
     \\& \quad + \lim_{ \nu \to \infty } \frac{  \sum_{i=0}^{\nu-2}
     \log \left(    g_{\nu -i-1} \left( 2\alpha_\nu \, \nu^2\right)\right)}{\nu \log \nu }
     \\& \stackrel{{\rm{(c)}}}{=} -\frac{1}{2}\lim_{ \nu \to \infty } \frac{\log(\alpha_\nu)}{\nu \log(\nu)} \nonumber
     \\& \qquad + \lim_{ \nu \to \infty } \frac{  \sum_{i=0}^{\nu-2}
     \log \left(    g_{\nu -i-1} \left( 2\alpha_\nu \, \nu^2 \right)\right)}{\nu \log \nu }
     \\& \stackrel{{\rm{(d)}}}{=} -\frac{1}{2}\lim_{ \nu \to \infty } \frac{\log(\alpha_\nu)}{\nu \log(\nu)} \nonumber
     \\& \qquad +\! \lim_{ \nu \to \infty } \!\frac{  \sum_{i=0}^{\nu-2}
     \log \left(   \frac{2\alpha_\nu \, \nu^2 }{  \nu-i-1 + \sqrt{ ( \nu-i-1)^2 +\left(2\alpha_\nu \, \nu^2 \right)^2 } }   \right)}{\nu \log \nu }
     \\& = -\frac{1}{2}\lim_{ \nu \to \infty } \frac{\log \alpha_\nu}{\nu \log(\nu)} ,
\end{align}
where $\rm{(c)}$ is due to~\eqref{eq:Bessel_function_approx_large-x} and $\rm{(d)}$ follows from~\eqref{eq:Defgnu}.
The upper bound follows along similar lines. 
Combining these facts with~\eqref{eq:FourthPropInterm} leads to
\begin{align}
   & \lim_{ \nu \to \infty } \frac {h_{\nu} \left( \sqrt{1- \frac{1}{ \alpha_\nu \, \nu} }  \right)}{\nu \log \nu }   \notag\\
    &=   -2 - \lim_{\nu \to \infty} \left(    \frac{ \log  \alpha_\nu }{\log \nu } +\frac{1}{2} \frac{\log \alpha_\nu }{\nu \log \nu} \right) \\
    &=- 2 -  \lim_{\nu \to \infty} \frac{ \left( \nu + \frac{1}{2}  \right)\log \alpha_\nu}{\nu \log \nu } \\
    & =- 2 -  \lim_{\nu \to \infty} \frac{\nu + \frac{1}{2}}{\nu} \cdot \lim_{\nu \to \infty} \frac{ \log \alpha_\nu}{\log \nu}\\
    &=- 2 -  \lim_{\nu \to \infty} \frac{ \log \alpha_\nu}{\log \nu}.
    \end{align}

\subsection{Sixth Property}
To show the sixth and last property, we note that from~\eqref{eq:hnu}, we have that
\begin{equation}
\frac{\rmd}{\rmd t} h_\nu(t) =  \xi_{\nu}^\prime \left( f_{ \nu }^{-1} \left(t \right ) \right) \frac{\rmd}{\rmd t} f_{ \nu }^{-1} \left(t \right ) = -f_\nu^{-1}(t),
\end{equation}
where the last equality follows from using~\eqref{eq:DerivativeBessels} and because of the following facts,
\begin{subequations}
\begin{align}
&\frac{\rmd}{\rmd u} \xi(u) = -u + (2 \nu -1) f_{\nu}(u) + u f_{\nu}^2 (u),
\\&\frac{\rmd}{\rmd t} f_{ \nu }^{-1} \left(t \right ) = \frac{1}{ f^\prime_\nu \left(f_\nu^{-1}(t) \right )},
\\&
\frac{\rmd}{\rmd u} f_\nu(u) = 1 - \frac{2\nu-1}{u} f_\nu(u) - f_\nu^2(u).
\end{align}
\end{subequations}
This concludes the proof of Lemma~\ref{lemma:PropertyXinu}.

\section{Proof of Lemma~\ref{lemma:RDSimpl1}}
\label{sec:ReconstrDistr}

First, we note that by using standard Lagrangian duality arguments, we have that
\begin{align}
&\sfR_n (D;R) \notag
\\& =  \inf_{ P_{\hat{\sfX}| \sfX_R }: \, 
\hat{\sfX} \in \bbR^n, \, \bbE \left[ \| \hat{\sfX} - \sfX_R \|^2 \right] \le D} I(\hat{\sfX}; \sfX_R) 
\\& = \max_{\lambda \geq 0} \!\! \inf_{ P_{\hat{\sfX}| \sfX_R }: \, \hat{\sfX} \in \bbR^n } \!\!\!I(\hat{\sfX}; \sfX_R) \!+\! \lambda \left( \bbE \left[ \| \hat{\sfX} \!-\! \sfX_R \|^2 \right] \!-\!D\right ).
\end{align}
We now focus on the inner optimization in the expression above. Specifically, for $\lambda  \geq 0$, we consider 
\begin{equation}
\sfR_{n,\lambda}(D;R) = \inf_{ P_{\hat{\sfX}| \sfX_R }: \, \hat{\sfX} \in \bbR^n } \!\!I(\hat{\sfX}; \sfX_R) \!+\! \lambda \left( \bbE \left[ \| \hat{\sfX} \!-\! \sfX_R \|^2 \right] \!-\!D\right ). \label{eq:lambda_version}
\end{equation}
We leverage the following lemma~\cite{cover1999elements,csiszar1974extremum}, which provides the Karush–Kuhn–Tucker (KKT) conditions for the above optimization problem.
\begin{lemma}
\label{lemm:CondOpti}
Let
\begin{equation}
\label{eq:gHat}
g(\hat{\sfx}) = \bbE \left [ \frac{{\rm{e}}^{-\lambda \| \hat{\sfx}-\sfX_R \|^2}}{q_{\lambda}(\sfX_R)}\right ],
\end{equation}
where
\begin{equation}
\label{eq:hLambda}
q_{\lambda}(\sfx) = \bbE \left [ {\rm{e}}^{-\lambda \| \hat{\sfX}-\sfx \|^2}\right ],
\end{equation}
and where $\lambda \geq 0$.
Then, $P_{\hat{\sfX}}$ is a valid reconstruction distribution in \eqref{eq:lambda_version} if and only if the following holds,
\begin{align}
\begin{array}{ll}
g(\hat{\sfx}) = 1  & {\text{for all}} \ \hat{\sfx} \in S_{\hat{\sfX}} ,
\\ g(\hat{\sfx}) \leq 1  & {\text{for all}} \ \hat{\sfx},
\end{array}
\end{align}
where $S_{\hat{\sfX}}$ is the range of $\hat{\sfX}$.
\end{lemma}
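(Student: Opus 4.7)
The plan is to derive the stated optimality conditions via the standard KKT/variational characterization from~\cite{cover1999elements, csiszar1974extremum}. Building on the Gibbs calculation that already appeared in the derivation of~\eqref{eq:StepCWithq}, the central observation is that for any candidate reconstruction marginal $P_{\hat{\sfX}}$, the conditional minimizing $I(\hat{\sfX};\sfX_R) + \lambda \bbE[\|\hat{\sfX}-\sfX_R\|^2]$ among those with that marginal is the Gibbs kernel
\begin{equation}
P^{\ast}_{\hat{\sfX}|\sfX_R}(\rmd\hat{\sfx}\,|\,\sfx) = \frac{\rme^{-\lambda\|\hat{\sfx}-\sfx\|^2}}{q_{\lambda}(\sfx)}\,\rmd P_{\hat{\sfX}}(\hat{\sfx}),
\end{equation}
and averaging this kernel against $P_{\sfX_R}$ produces the induced $\hat{\sfX}$-marginal $g(\hat{\sfx})\,\rmd P_{\hat{\sfX}}(\hat{\sfx})$. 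Self-consistency, i.e.\ the requirement that the Gibbs kernel be a legitimate conditional for $P_{\hat{\sfX}}$, is therefore exactly the on-support condition $g(\hat{\sfx})=1$ for $\hat{\sfx}\in S_{\hat{\sfX}}$.

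With self-consistency in place, the Lagrangian value collapses to $-\bbE[\log q_{\lambda}(\sfX_R)] - \lambda D$, so the problem reduces to maximizing $J(Q) = \bbE[\log q_{\lambda}^{Q}(\sfX_R)]$ over admissible marginals $Q$, where $q_\lambda^Q(\sfx)=\int\rme^{-\lambda\|\hat{\sfx}-\sfx\|^2}\,\rmd Q(\hat{\sfx})$. The off-support condition $g(\hat{\sfx})\le 1$ then emerges from a first-variation argument: perturb $P_{\hat{\sfX}}$ by mixing with a point mass, $P^{\epsilon} = (1-\epsilon)P_{\hat{\sfX}} + \epsilon\,\delta_{\hat{\sfx}_0}$, and a direct computation gives
\begin{equation}
\frac{\rmd}{\rmd \epsilon} J(P^{\epsilon})\bigg|_{\epsilon=0^{+}} = g(\hat{\sfx}_0) - 1.
\end{equation}
Since only one-sided perturbations $\epsilon\ge 0$ are admissible when $\hat{\sfx}_0\notin S_{\hat{\sfX}}$, optimality of $P_{\hat{\sfX}}$ forces this directional derivative to be $\le 0$, yielding $g(\hat{\sfx}_0)\le 1$. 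Interior perturbations that shift mass within $S_{\hat{\sfX}}$ are two-sided and must produce a zero derivative, recovering $g\equiv 1$ on $S_{\hat{\sfX}}$ and so the necessity direction.

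Sufficiency is then obtained by reversing the same chain: the concavity of $J$ (inherited from the convexity of $-\log$ composed with the linear map $Q\mapsto q_{\lambda}^{Q}(\sfx)$), combined with a non-positive first variation in every admissible direction, makes $P_{\hat{\sfX}}$ a global maximizer of $J$, and self-consistency allows the Gibbs kernel to attain the Lagrangian minimum $-\bbE[\log q_{\lambda}(\sfX_R)]-\lambda D$. The main obstacle I anticipate is measure-theoretic: $P_{\hat{\sfX}}$ turns out to be supported on a lower-dimensional sphere rather than on an open set in $\bbR^n$, so some care is needed to argue that the first-variation identity holds for a test set of $\hat{\sfx}_0$ rich enough to conclude $g\le 1$ everywhere, and that the interchange of $\rmd/\rmd\epsilon$ with $\bbE_{\sfX_R}[\cdot]$ is justified. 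Both are handled by dominated convergence since $\rme^{-\lambda\|\hat{\sfx}_0-\sfX_R\|^2}/q_{\lambda}(\sfX_R)$ is uniformly bounded when $\hat{\sfx}_0$ and $\sfX_R$ range over compact sets.
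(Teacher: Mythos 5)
The paper does not prove Lemma~\ref{lemm:CondOpti} at all: it is quoted as a known result with citations to Cover--Thomas and to Csisz\'ar's extremum-problem paper, so your proposal is not so much an alternative route as a reconstruction of the standard argument the paper outsources. Your reconstruction is essentially the right one: reduce the Lagrangian to the concave functional $J(Q)=\bbE\lrs{\log q_\lambda^Q(\sfX_R)}$ via the double-minimization identity (minimize over the conditional for a fixed reference output measure, which yields the Gibbs kernel), then read off the KKT conditions from the one-sided directional derivative $\frac{\rmd}{\rmd\epsilon}J\lr{(1-\epsilon)Q+\epsilon\delta_{\hat\sfx_0}}\big|_{\epsilon=0^+}=g(\hat\sfx_0)-1$, with sufficiency following from concavity. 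Your computation of that derivative is correct, and your concern about interchanging $\rmd/\rmd\epsilon$ with $\bbE_{\sfX_R}$ is legitimately dispatched by dominated convergence, since $\sfX_R$ lives on a compact sphere and $q_\lambda$ is continuous and strictly positive there.

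Two soft spots are worth fixing. First, your opening claim that the Gibbs kernel is the minimizer \emph{among conditionals with marginal $P_{\hat\sfX}$} is not literally true: the Gibbs kernel built from reference measure $P_{\hat\sfX}$ need not reproduce that marginal; the correct statement is that it minimizes $\bbE\lrs{\log \frac{\rmd P_{\hat\sfX|\sfX_R}}{\rmd Q}+\lambda\|\hat\sfX-\sfX_R\|^2}$ over \emph{all} conditionals for fixed reference $Q$, and $I(\hat\sfX;\sfX_R)$ is recovered only when $Q$ equals the induced marginal --- which is exactly the self-consistency you then impose, so the argument survives once rephrased. Second, your derivation of $g\equiv 1$ on $S_{\hat\sfX}$ via ``two-sided interior perturbations'' does not work for a non-atomic $Q$: the measure $(1+\epsilon)Q-\epsilon\delta_{\hat\sfx_0}$ is not positive for $\epsilon>0$ unless $Q$ has an atom at $\hat\sfx_0$. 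The clean fix is Fubini: $\int g\,\rmd P_{\hat\sfX}=\bbE_{\sfX_R}\lrs{q_\lambda(\sfX_R)/q_\lambda(\sfX_R)}=1$, which together with $g\le 1$ everywhere forces $g=1$ $P_{\hat\sfX}$-almost everywhere, and continuity of $g$ upgrades this to the closed support. With those two repairs the proof is complete and matches the cited Csisz\'ar conditions.
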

We now make a guess that the reconstruction random vector $\hat{\sfX}$ is uniformly supported on $\bbS^{n-1}(r) $ for some $r\geq 0$, and denote it by $\hat{\sfX}_r$. In this case, the function $q_{\lambda}(\sfx)$ in~\eqref{eq:hLambda} is given by~\cite[Prop.~1]{dytso2019capacity},
\begin{align}
    q_{\lambda}(\sfx) =  
    2^{ \frac{n}{2} -1} 
    \Gamma \left( \frac{n}{2} \right) \rme^{- \lambda ( r^2 +\|\sfx\|^2)} \frac{\sfI_{\frac{n}{2} -1}(2\lambda r \| \sfx \|) }{ ( 2 \lambda \| \sfx \| r )^{\frac{n}{2} -1} }.
    %\label{eq:hlambdaGeneral}
\end{align}
Note that, since $q_{\lambda}(\sfx) $ is only a function of $\| \sfx \|$, we also use the notation
\begin{align}
    q_{\lambda}(\sfx) = q_{\lambda}(\| \sfx \| ;r ), 
\end{align}
where we emphasize the dependence on $r$.

Now, from Lemma~\ref{lemm:CondOpti}, the function $g(\hat{\sfx})$ in~\eqref{eq:gHat} is given~by
\begin{align}
g(\hat{\sfx}) & = \bbE \left [ \frac{{\rm{e}}^{-\lambda \| \hat{\sfx}-\sfX_R \|^2}}{q_{\lambda}(\sfX_R)}\right ] 
\\ & \stackrel{{\rm{(a)}}}{=}\frac{ \bbE \left [ {\rm{e}}^{-\lambda \| \hat{\sfx}-\sfX_R \|^2}\right ]}{q_{\lambda}( R;r)} 
\\& \stackrel{{\rm{(b)}}}{=} \frac{ q_{\lambda}( \| \hat{\sfx} \|;R)}{q_{\lambda}( R;r)} 
\\& \stackrel{{\rm{(c)}}}{=} \frac{ q_{\lambda}( \| \hat{\sfx} \|;R)}{q_{\lambda}( r;R)} ,\label{eq:new_g}
\end{align}
where the labeled equalities follow from:
$\rm{(a)}$ the fact that $q_{\lambda}(\sfX_R)$ depends only on $\|\sfX_R\| = R$ with $R$ being a constant and hence, it can brought outside of the expectation;
$\rm{(b)}$ using~\eqref{eq:hLambda};
and $\rm{(c)}$ the fact that $q_{\lambda}( r;R)= q_{\lambda}(R;r)$

Note that $g(\hat{\sfx})$ is also only a function of $\|\hat{\sfx}\|$ and hence, we can use the notation $g(\hat{\sfx}) = g( \| \hat{\sfx}\|)$.  Combining all of these observations, the conditions in Lemma~\ref{lemm:CondOpti} can be rewritten as,
\begin{align}
g(t) & =1, \, \ t = r,\\
g(t) &\le 1, \, \ t  \in [0,\infty],
\end{align}
or by using~\eqref{eq:new_g} we can rewrite them as,
\begin{align}
 q_{\lambda}( t;R) &= q_{\lambda}( r;R),  \, \ t=r ,\label{eq:trivial_equation}\\
  q_{\lambda}( t;R) &\le q_{\lambda}( r;R), \, \ t \in [0, \infty), 
\end{align} 
where~\eqref{eq:trivial_equation} holds trivially.  
Therefore, to show that a reconstruction distribution is supported on $\bbS^{n-1}(r) $ for some $r\geq 0$,
%a single shell, 
we require to show that there exists an $r\geq 0$ such that 
\begin{align}
  q_{\lambda}( t;R) &\le q_{\lambda}( r;R), \,  \ t \in [0, \infty) .
\end{align} 
Clearly, such an $r$ exists and is given by
\begin{align*}
r^\star_\lambda =\arg \max_{t  \ge 0} \ q_{\lambda}( t;R). 
\end{align*}
At this point, we do not seek to characterize $r^\star_\lambda$, but conclude that for every $\lambda \geq 0$  the minimizing distribution is uniformly supported on
an $(n-1)$-sphere
%a single shell, 
which implies that 
\begin{align}
&\sfR_n (D;R) \notag
\\
& = \inf_{ P_{\hat{\sfX}| \sfX_R }: \hat{\sfX} \in \bbR^n, \,  \bbE \left[ \| \hat{\sfX} - \sfX_R \|^2 \right] \le D} I(\hat{\sfX}; \sfX_R) \\
&=  \inf_{ P_{\hat{\sfX}_r| \sfX_R }, \, r \ge 0:  \,  \bbE \left[ \| \hat{\sfX}_r - \sfX_R \|^2 \right] \le D} I(\hat{\sfX}_r; \sfX_R). 
\end{align}
This concludes the proof of Lemma~\ref{lemma:RDSimpl1}.

 \bibliography{refs.bib}

% Generated by IEEEtran.bst, version: 1.14 (2015/08/26)
\begin{thebibliography}{10}
\providecommand{\url}[1]{#1}
\csname url@samestyle\endcsname
\providecommand{\newblock}{\relax}
\providecommand{\bibinfo}[2]{#2}
\providecommand{\BIBentrySTDinterwordspacing}{\spaceskip=0pt\relax}
\providecommand{\BIBentryALTinterwordstretchfactor}{4}
\providecommand{\BIBentryALTinterwordspacing}{\spaceskip=\fontdimen2\font plus
\BIBentryALTinterwordstretchfactor\fontdimen3\font minus
  \fontdimen4\font\relax}
\providecommand{\BIBforeignlanguage}[2]{{%
\expandafter\ifx\csname l@#1\endcsname\relax
\typeout{** WARNING: IEEEtran.bst: No hyphenation pattern has been}%
\typeout{** loaded for the language `#1'. Using the pattern for}%
\typeout{** the default language instead.}%
\else
\language=\csname l@#1\endcsname
\fi
#2}}
\providecommand{\BIBdecl}{\relax}
\BIBdecl

\bibitem{BERRY1990130}
\BIBentryALTinterwordspacing
J.~Berry, ``Minimax {E}stimation of a {B}ounded {N}ormal {M}ean {V}ector,''
  \emph{Journal of Multivariate Analysis}, vol.~35, no.~1, pp. 130--139, 1990.
  [Online]. Available:
  \url{https://www.sciencedirect.com/science/article/pii/0047259X9090020I}
\BIBentrySTDinterwordspacing

\bibitem{MARCHAND2002327}
\BIBentryALTinterwordspacing
{\'E}.~Marchand and F.~Perron, ``On the {M}inimax {E}stimator of a {B}ounded
  {N}ormal {M}ean,'' \emph{Statistics \& Probability Letters}, vol.~58, no.~4,
  pp. 327--333, 2002. [Online]. Available:
  \url{https://www.sciencedirect.com/science/article/pii/S0167715202000895}
\BIBentrySTDinterwordspacing

\bibitem{fourdrinier2010bayes}
D.~Fourdrinier and {\'E}.~Marchand, ``On {B}ayes estimators with uniform priors
  on spheres and their comparative performance with maximum likelihood
  estimators for estimating bounded multivariate normal means,'' \emph{Journal
  of Multivariate Analysis}, vol. 101, no.~6, pp. 1390--1399, 2010.

\bibitem{mardia2000directional}
K.~V. Mardia, P.~E. Jupp, and K.~Mardia, \emph{Directional Statistics}.\hskip
  1em plus 0.5em minus 0.4em\relax Wiley Online Library, 2000, vol.~2.

\bibitem{robert1990modified}
C.~Robert, ``Modified {B}essel functions and their applications in probability
  and statistics,'' \emph{Statistics \& Probability Letters}, vol.~9, no.~2,
  pp. 155--161, 1990.

\bibitem{dytso2019capacity}
A.~Dytso, M.~Al, H.~V. Poor, and S.~S. Shitz, ``On the {C}apacity of the {P}eak
  {P}ower {C}onstrained {V}ector {G}aussian {C}hannel: An {E}stimation
  {T}heoretic {P}erspective,'' \emph{IEEE Transactions on Information Theory},
  vol.~65, no.~6, pp. 3907--3921, 2019.

\bibitem{wireTapDegraded}
A.~Dytso, M.~Egan, S.~M. Perlaza, H.~V. Poor, and S.~S. Shitz, ``Optimal
  {I}nputs for {S}ome {C}lasses of {D}egraded {W}iretap {C}hannels,'' in
  \emph{2018 IEEE Information Theory Workshop (ITW)}, 2018, pp. 1--5.

\bibitem{vectWireTap}
\BIBentryALTinterwordspacing
A.~Favano, L.~Barletta, and A.~Dytso, ``Amplitude {C}onstrained {V}ector
  {G}aussian {W}iretap {C}hannel: {P}roperties of the
  {S}ecrecy-{C}apacity-{A}chieving {I}nput {D}istribution,'' \emph{Entropy},
  vol.~25, no.~5, 2023. [Online]. Available:
  \url{https://www.mdpi.com/1099-4300/25/5/741}
\BIBentrySTDinterwordspacing

\bibitem{tan2015third}
V.~Y.~F. Tan and M.~Tomamichel, ``The {T}hird-{O}rder {T}erm in the {N}ormal
  {A}pproximation for the {AWGN} {C}hannel,'' \emph{IEEE Transactions on
  Information Theory}, vol.~61, no.~5, pp. 2430--2438, 2015.

\bibitem{molavianjazi2015second}
E.~MolavianJazi and J.~N. Laneman, ``A {S}econd-{O}rder {A}chievable {R}ate
  {R}egion for {G}aussian {M}ulti-{A}ccess {C}hannels via a {C}entral {L}imit
  {T}heorem for {F}unctions,'' \emph{IEEE Transactions on Information Theory},
  vol.~61, no.~12, pp. 6719--6733, 2015.

\bibitem{scarlett2015second}
J.~Scarlett and V.~Y. Tan, ``Second-{O}rder {A}symptotics for the {G}aussian
  {MAC} with {D}egraded {M}essage {S}ets,'' \emph{IEEE Transactions on
  Information Theory}, vol.~61, no.~12, pp. 6700--6718, 2015.

\bibitem{tuninetti2023second}
D.~Tuninetti, P.~Sheldon, B.~Smida, and N.~Devroye, ``On {S}econd {O}rder
  {R}ate {R}egions for the {S}tatic {S}calar {G}aussian {B}roadcast
  {C}hannel,'' \emph{IEEE Journal on Selected Areas in Communications},
  vol.~41, no.~7, pp. 1982--1999, 2023.

\bibitem{IC_disp}
S.-Q. Le, V.~Y.~F. Tan, and M.~Motani, ``A {C}ase {W}here {I}nterference {D}oes
  {N}ot {A}ffect the {C}hannel {D}ispersion,'' \emph{IEEE Transactions on
  Information Theory}, vol.~61, no.~5, pp. 2439--2453, 2015.

\bibitem{scarlett2015dispersions}
J.~Scarlett, ``On the {D}ispersions of the {G}el’fand--{P}insker {C}hannel
  and {D}irty {P}aper {C}oding,'' \emph{IEEE Transactions on Information
  Theory}, vol.~61, no.~9, pp. 4569--4586, 2015.

\bibitem{riegler2023lossy}
E.~Riegler, G.~Koliander, and H.~B{\"o}lcskei, ``Lossy compression of general
  random variables,'' \emph{Information and Inference: A Journal of the IMA},
  vol.~12, no.~3, p. 1759–1829, 2023.

\bibitem{berger1998lossy}
T.~Berger and J.~D. Gibson, ``Lossy {S}ource {C}oding,'' \emph{IEEE
  Transactions on Information Theory}, vol.~44, no.~6, pp. 2693--2723, 1998.

\bibitem{swaszek1983multidimensional}
P.~Swaszek and J.~Thomas, ``Multidimensional {S}pherical {C}oordinates
  {Q}uantization,'' \emph{IEEE Transactions on Information Theory}, vol.~29,
  no.~4, pp. 570--576, 1983.

\bibitem{matschkal2009spherical}
B.~Matschkal and J.~B. Huber, ``Spherical {L}ogarithmic {Q}uantization,''
  \emph{IEEE transactions on audio, speech, and language processing}, vol.~18,
  no.~1, pp. 126--140, 2009.

\bibitem{eghbali2019deep}
S.~Eghbali and L.~Tahvildari, ``Deep {S}pherical {Q}uantization for {I}mage
  {S}earch,'' in \emph{Proceedings of the IEEE/CVF Conference on Computer
  Vision and Pattern Recognition}, 2019, pp. 11\,690--11\,699.

\bibitem{joseph2018adversarial}
A.~D. Joseph, B.~Nelson, B.~I. Rubinstein, and J.~Tygar, \emph{Adversarial
  {M}achine {L}earning}.\hskip 1em plus 0.5em minus 0.4em\relax Cambridge
  University Press, 2018.

\bibitem{smith1994sphere}
D.~P. Smith, ``The {S}phere as a {T}ool for {T}eaching {S}tatistics,''
  \emph{Journal of Geological Education}, vol.~42, no.~5, pp. 412--416, 1994.

\bibitem{stam1982limit}
A.~J. Stam, ``Limit theorems for uniform distributions on spheres in
  high-dimensional {E}uclidean spaces,'' \emph{Journal of Applied probability},
  vol.~19, no.~1, pp. 221--228, 1982.

\bibitem{cover1999elements}
T.~Cover and J.~Thomas, \emph{Elements of Information Theory: Second
  Edition}.\hskip 1em plus 0.5em minus 0.4em\relax Wiley, 2006.

\bibitem{kawabata1994rate}
T.~Kawabata and A.~Dembo, ``The {R}ate-{D}istortion {D}imension of {S}ets and
  {M}easures,'' \emph{IEEE Transactions on Information Theory}, vol.~40, no.~5,
  pp. 1564--1572, 1994.

\bibitem{wu2010renyi}
Y.~Wu and S.~Verd{\'u}, ``R{\'e}nyi {I}nformation {D}imension: {F}undamental
  {L}imits of {A}lmost {L}ossless {A}nalog {C}ompression,'' \emph{IEEE
  Transactions on Information Theory}, vol.~56, no.~8, pp. 3721--3748, 2010.

\bibitem{stotz2016degrees}
D.~Stotz and H.~B{\"o}lcskei, ``Degrees of {F}reedom in {V}ector {I}nterference
  {C}hannels,'' \emph{IEEE Transactions on Information Theory}, vol.~62, no.~7,
  pp. 4172--4197, 2016.

\bibitem{abramowitz1988handbook}
M.~Abramowitz, I.~A. Stegun, and R.~H. Romer, \emph{Handbook of Mathematical
  Functions with Formulas, Graphs, and Mathematical Tables}.\hskip 1em plus
  0.5em minus 0.4em\relax American Association of Physics Teachers, 1988.

\bibitem{SeguraBessel}
\BIBentryALTinterwordspacing
J.~Segura, ``Monotonicity {P}roperties for {R}atios and {P}roducts of
  {M}odified {B}essel {F}unctions and {S}harp {T}rigonometric {B}ounds,''
  \emph{Results Math}, vol.~74, no. 221, 2021. [Online]. Available:
  \url{https://doi.org/10.1007/s00025-021-01531-1}
\BIBentrySTDinterwordspacing

\bibitem{schniter2014compressive}
P.~Schniter and S.~Rangan, ``Compressive {P}hase {R}etrieval via {G}eneralized
  {A}pproximate {M}essage {P}assing,'' \emph{IEEE Transactions on Signal
  Processing}, vol.~63, no.~4, pp. 1043--1055, 2014.

\bibitem{SEGURA2011516}
\BIBentryALTinterwordspacing
J.~Segura, ``Bounds for ratios of modified {B}essel functions and associated
  {T}ur\'an-type inequalities,'' \emph{Journal of Mathematical Analysis and
  Applications}, vol. 374, no.~2, pp. 516--528, 2011. [Online]. Available:
  \url{https://www.sciencedirect.com/science/article/pii/S0022247X10007742}
\BIBentrySTDinterwordspacing

\bibitem{Natalini2010}
A.~Laforgia and P.~Natalini, ``Some {I}nequalities for {M}odified {B}essel
  {F}unctions,'' \emph{Journal of Inequalities and Applications}, no. 253035,
  2010.

\bibitem{I-mmse}
D.~Guo, S.~Shamai, and S.~Verd{\'u}, ``Mutual {I}nformation and {M}inimum
  {M}ean-{S}quare {E}rror in {G}aussian {C}hannels,'' \emph{IEEE Transactions
  on Information Theory}, vol.~51, no.~4, pp. 1261--1282, 2005.

\bibitem{csiszar1974extremum}
I.~Csisz{\'a}r, ``On an extremum problem of information theory,'' \emph{Studia
  Scientiarum Mathematicarum Hungarica}, vol.~9, no.~1, pp. 57--71, 1974.

\end{thebibliography}
 \bibliographystyle{IEEEtran}
\end{document}